\definecolor{skyblue6}{rgb}{.2, .6, .8}
\definecolor{iceberg}{rgb}{0.44, 0.65, 0.82}
\definecolor{firebrick}{rgb}{.7, .13, .13}
\definecolor{blueice}{rgb}{.85, .96, .94}
\definecolor{lightcopper}{rgb}{.93, .76, .58}
\definecolor{buff}{rgb}{0.94, 0.86, 0.51}
\definecolor{celadon}{rgb}{0.67, 0.88, 0.69}
\def\NAT@def@citea{\def\@citea{\NAT@separator}}
\theoremstyle{plain}
\newtheorem{theorem}{Theorem}[section]
\newtheorem{corollary}[theorem]{Corollary}
\newtheorem{proposition}[theorem]{Proposition}
\newtheorem{conjetura}{Conjeture}
\theoremstyle{definition}
\newtheorem{definition}[theorem]{Definition}
\newtheorem{example}[theorem]{Example}
\theoremstyle{remark}
\newtheorem{remark}{Remark}
\begin{document}


\title{Construction of observable and MDP convolutional codes with good decodable properties for erasure channels by I/S/O representations \footnote{An initial version of this work was presented in ALAMA 2022}}

\author{
\name{N. DeCastro-Garc\'ia\textsuperscript{a}\thanks{CONTACT Noem\'i DeCastro-Garc\'ia. Email: ncasg@unileon.es}, M. V. Carriegos\textsuperscript{a} and A.L. Mu\~noz Casta\~neda\textsuperscript{a}}
\affil{\textsuperscript{a} Department of Mathematics. Universidad de Le\'on. Campus de Vegazana s/n 24071 Le\'on (Spain)}
}

\maketitle

\begin{abstract}
This paper addresses the construction of observable convolutional codes that exhibit good performance with the available decoding algorithms for erasure channels. Our construction is based on the use of input/state/output (I/S/O) representations and the invariance of certain properties of linear systems under various group actions. 

\end{abstract}

\begin{keywords}
convolutional codes; linear systems;  decoding; erasure channel; maximum distance profile
\end{keywords}


\section{Introduction}

Since Claude Shannon proposed entropy as a measure of information, coding theory has become an increasingly important field of research. In particular, convolutional codes are error-detecting and error-correcting codes used to transmit, detect, and correct information sent through a communication channel.

A fundamental issue in convolutional coding theory is the development of methods for constructing convolutional codes with desirable properties, such as error non-propagation (non-catastrophic behavior) and good performance when a decoding algorithm is applied. 
To this end convolutional coding theory is studied from different perspectives. 
In this article, we focus on the representation that models the dynamics of the convolutional code, describing it as a free submodule $\mathcal{C} \subset \mathbb{F}[z]^{n}$ of rank 
k. In this context, the associated linear dynamical system is known as an \emph{input/state/output} (I/S/O) representation. For this representation, it is required that the codewords have finite support. The resulting I/S/O representation is a minimal realization of the convolutional code as a discrete and reachable linear system.

I/S/O representations are valuable because they allow us to exploit the structural properties of linear systems in the context of coding theory. One important application is the use of reachable and observable representations to obtain non-catastrophic convolutional codes\cite{napiso,pinto2017,NPP2017,napp2019,nuevonapp,climentpinto,climent,valencia,isabel3,noeconcat}. Additionally, these representations play a key role in the construction of algebraic decoding algorithms for transmitting information over erasure or noisy channels \cite{lieb,decodingros,paconuevo}.

On the other hand, another desirable property of a convolutional code is a good distance profile, which ensures an optimal recovery rate. Maximum distance profile (MDP) convolutional codes exhibit this property. Therefore, finding methods to construct MDP convolutional codes presents an interesting problem in coding theory. The first procedure was proposed in \cite{strongly2} and is based on the construction of proper and lower triangular Toeplitz superregular matrices. This led to several algebraic works that provide general constructions for MDP convolutional codes \cite{ALMEIDA2013,ALMEIDA2016,angelMDP}. In \cite{lieb}, the concept of using superregular matrices to construct MDP convolutional codes, which are decodable via the proposed algorithm in an erasure channel is explored using I/S/O representations. More recently, in \cite{chen}, a family of MDP convolutional codes is constructed based on the theory of skew polynomials.

The goal of this article is to provide an algebraic method for constructing observable convolutional codes with desirable decoding properties for existing algorithms designed for erasure channels. The approach involves considering group actions on I/S/O representations, which allow us to generate new convolutional codes from a given one. The main advantage of these group actions is that they offer a new way to construct convolutional codes by applying well-established procedures that have proven effective in terms of distance and decoding properties.

This article is organized as follows: Section 2 presents an overview of the preliminary results. In Section 3, we prove our main results. Finally, the conclusions and references are provided.
\section{Preliminary Results}

In this section, we provide an overview of all the preliminary results. In classical convolutional coding theory, the input alphabet channel is the finite field \( \mathbb{F}_{q} \). From now on, we will denote it by \( \mathbb{F} \), assuming that \( q = p^{r} \).

\subsection{Convolutional Codes}

Convolutional codes were introduced by Peter Elias in 1955 \cite{elias}, and the first algebraic theoretical approach to convolutional codes was provided by Forney in \cite{forney}. There are several definitions of convolutional codes, depending on the specific requirements of the message transmission process. We consider a \( (n,k) \) convolutional code \( \mathcal{C} \) over a finite field \( \mathbb{F} \) as a \( \mathbb{F}[z] \)-generated finite submodule of rank \( k \), i.e., \( \mathcal{C} \subseteq \mathbb{F}[z]^{n} \) (Definition D', \cite{Conections}). The code rate is defined as \( \frac{k}{n} \).

The generator matrix \( G(z) \) of a \( (n,k) \) convolutional code \( \mathcal{C} \) over \( \mathbb{F}[z] \) is a $\mathbb{F}[z]$-linear map \( G(z): \mathbb{F}[z]^{l} \longrightarrow \mathbb{F}[z]^{n} \), \( u(z) \mapsto v(z) = G(z) \cdot u(z) \), such that \( \text{Im} \, G(z) = \mathcal{C} \). An encoder \( G(z) \) of a \( (n,k) \) convolutional code \( \mathcal{C} \) over \( \mathbb{F}[z] \) is a generator matrix with \( l = k \) and \( G(z) \) injective.

A given convolutional code admits several different convolutional encoders. We say that two encoders are equivalent if they generate the same code. It is well known that two encoders \( G(z) \) and \( G'(z) \) generate the same code if and only if there exists a unimodular matrix \( U(z) \in \mathbb{F}[z]^{k \times k} \) such that \( G'(z) = G(z) \cdot U(z) \) (Theorem 4, \cite{forney}).

Next, we recall a natural equivalence relation in the set of convolutional codes.
\begin{definition}
Two convolutional codes \( \mathcal{C}, \mathcal{C}' \subset \mathbb{F}[z]^{n} \) of dimension \( k \) are equivalent if and only if there exists a permutation of \( n \) letters, \( \sigma \in \mathcal{S}_{n} \), such that \( \sigma(\mathcal{C}) = \mathcal{C}' \). If \( G(z) \) and \( G'(z) \) are generator matrices for \( \mathcal{C} \) and \( \mathcal{C}' \), respectively, then \( \mathcal{C} \) and \( \mathcal{C}' \) are equivalent if and only if there exists a permutation matrix \( P \) and a unimodular matrix \( U(z) \in \mathbb{F}[z]^{k \times k} \) such that \( G'(z) = P \cdot G(z) \cdot U(z) \).
\end{definition}

An essential property of convolutional codes is  observability, which ensures that the code is not catastrophic. Algebraically, a \( (n,k) \) convolutional code \( \mathcal{C} \) over \( \mathbb{F} \) is observable if and only if there exists a syndrome map \( \psi:\mathbb{F}[z]^{n}\twoheadrightarrow \mathbb{F}[z]^{n-k} \) such that the following sequence is exact (Lemma 3.3.2, \cite{York}):
\[
0 \rightarrow \mathbb{F}[z]^{k} \overset{G(z)}{\longrightarrow} \mathbb{F}[z]^{n} \overset{\psi}{\longrightarrow} \mathbb{F}[z]^{n-k} \rightarrow 0.
\]
Let \( G(z) \in \mathbb{F}[z]^{n \times k} \) be an encoder of a \( (n,k) \) convolutional code and let $g_{ij}(z)$ denote the $(i,j)$ entry of $G(z)$. The following encoder properties are important.
\begin{enumerate}
\item The column degrees of the encoder are the \( k \)-integers \( \nu_{1}, \nu_{2}, \ldots, \nu_{k} \), where \( \nu_{j} = \max \left\{ \text{deg}(g_{ij}) \mid 1 \leq i \leq n \text{ and } j = 1, \ldots, k \right\} \) [cf. Definition 3.1.5, \cite{York}]. These are also called the constraint length of the \( j \)-th input of the matrix \( G(z) \).
\item The sum of the column degrees, \( \nu = \sum_{j=1}^{k} \nu_{j} \), is called the complexity of the encoder [cf. Definition 3.1.5, \cite{York}], the total memory or overall constraint length, or the external degree of the encoder.
\item The highest degree of the full-size minors \( k \times k \) of any encoder \( G(z) \), \( \delta(\mathcal{C}) \), is called the complexity of the convolutional code \( \mathcal{C} \) [cf. Definition 3.1.7, \cite{York}] or the internal degree of the encoder [cf. \cite{Mc2}].
\end{enumerate}
\begin{definition}[cf. Definition 2.25, \cite{Mc}]
\( G(z) \) is a minimal encoder if it is a generator matrix for which \( \nu \) is the smallest possible over all equivalent encoders.
\end{definition}
If \( G(z) \) is minimal, then \( \{\nu_{j}\}_{j=1 \ldots k} \) are called the Forney indices of the code, and \( \nu \) is the degree of the code, usually denoted by \( \delta \) [cf. Definition 2.25, \cite{Mc}]. When \( G(z) \) is minimal, \( \nu = \delta_{\mathcal{C}} \), the internal degree, is minimum [cf. Corollary 2.7, \cite{Mc2}, Definition 3.18, \cite{York}]. 

When working with convolutional codes over finite fields, we can always find minimal encoders. In this case, the Forney indices of the code are unique and invariant for the code. Therefore, \( \delta \) is an invariant of the code, and we say that \( \mathcal{C} \) is an \( (n,k,\delta) \)-convolutional code. In this case, the Forney indices are also called the Kronecker or controllability indices of the convolutional code.

Another important parameter in convolutional coding theory is the  distance of the code. The distance of a convolutional code is a measure of its ability to protect data from errors. Several types of distances can be defined for convolutional codes:

\begin{enumerate}
\item The Hamming distance between \( c_{1}, c_{2} \in \mathbb{F}^{n} \) is defined as \( d(c_{1}, c_{2}) = \text{wt}(c_{1} - c_{2}) \), where \( \text{wt}(c) \) is the Hamming weight of \( c \in \mathbb{F}^{n} \), which is the number of nonzero components of \( c \). Similarly, the distance between \( c_{1}(z), c_{2}(z) \in \mathbb{F}[z]^{n} \) is defined as \( d(c_{1}(z), c_{2}(z)) = \text{wt}(c_{1}(z) - c_{2}(z)) \), where \( \text{wt}(c(z)) = \sum_{t=0}^{\text{deg}(c(z))} \text{wt}(c_{t}) \), the sum of the Hamming weights of the coefficients of \( c(z) \).
\item The free distance of a convolutional code \( \mathcal{C} \) is the minimum Hamming distance between any two distinct code sequences and is given by
\[
d_{\textrm{free}}(\mathcal{C}) := \min_{c_{1}(z), c_{2}(z) \in \mathcal{C}} \left\{ d(c_{1}(z), c_{2}(z)) \mid c_{1}(z) \neq c_{2}(z) \right\}.
\]
\item In addition to the free distance, convolutional codes also have column distances (\cite{MDS3}). For \( j \in \mathbb{N}_{0} \), the \( j \)-th column distance of a convolutional code is defined as
\[
d_{j}^{c}(\mathcal{C}) := \min_{c(z) \in \mathcal{C}} \left\{ \text{wt}(c_{[0,j]}(z)) \mid c(z) \in \mathcal{C} \text{ and } c_0 \neq \mathbf{0} \right\},
\]
where \( c_{[0,j]}(z) = c_{0} + c_{1}z + \ldots + c_{j}z^{j} \) represents the \( j \)-th truncation of the code vector \( c(z) \in \mathcal{C} \).
\end{enumerate}

There are upper bounds for the free distance and column distances. Let \( \mathcal{C} \) be a \( (n,k,\delta) \)-convolutional code over \( \mathbb{F} \). Then, the following results hold (\cite{MDS3,MDS}):
\begin{enumerate}
\item \( d_{\textrm{free}}(\mathcal{C}) \leq (n-k) \left( \left\lfloor \frac{\delta}{k} \right\rfloor + 1 \right) + \delta + 1 \) (generalized Singleton bound).
\item \( d_{j}^{c}(\mathcal{C}) \leq (n-k)(j+1) + 1 \) for all \( j \in \mathbb{N}_{0} \).
\end{enumerate}
A \( (n,k,\delta) \)-convolutional code is called MDS (maximum distance separable) if the free distance equals the generalized Singleton bound. If the column distance equals its upper bound for \( j = 0, \ldots, L = \left\lfloor \frac{\delta}{k} \right\rfloor + \left\lfloor \frac{\delta}{n-k} \right\rfloor \), the convolutional code is called MDP (maximum distance profile) \cite{MDS3}.



\subsection{Convolutional Codes and Linear Systems}
\begin{definition}[cf. Section 5.2, \cite{York}, Remark 4.1, \cite{ros}]
Let $\mathcal{C} \subset \mathbb{F}[z]^n$ be a $(n, k, \delta)$-convolutional code. An Input/State/Output (I/S/O) representation of $\mathcal{C}$ is a tuple of matrices $\Sigma = (A \in \mathbb{F}^{\delta \times \delta}, B \in \mathbb{F}^{\delta \times k}, C \in \mathbb{F}^{(n-k) \times \delta}, D \in \mathbb{F}^{(n-k) \times k})$ such that
\begin{equation*}
\mathcal{C} =
\left\{
\begin{array}{l} 
v(z) = \left( \begin{array}{c} y(z) \\ u(z) \end{array} \right) \in \mathbb{F}[z]^n : \exists x(z) \in \mathbb{F}[z]^{\delta} \\
\textrm{ satisfying } \footnotesize \left\{ \begin{array}{rcl} 
x_{t+1} & = & A \cdot x_t + B \cdot u_t \\
y_t & = & C \cdot x_t + D \cdot u_t \\
v_t & = & \left( \begin{array}{c} y_t \\ u_t \end{array} \right) \\
x_0 & = & 0
\end{array} \right.
\end{array}
\right\}.
\end{equation*}
where $x_t \in \mathbb{F}^{\delta}$ denotes the state vector, $u_t \in \mathbb{F}^k$ represents the control/input/information vector, and $y_t \in \mathbb{F}^{n-k}$ is the output/parity vector for each time step $t$. The vector $v_t$ is the code vector transmitted over the communication channel.
\end{definition}
The existence of I/S/O representations for a given $(n, k, \delta)$-convolutional code over $\mathbb{F}$ is constructive, as described in \cite{York} and \cite{ros}, and it requires first obtaining a minimal first-order representation.
\begin{definition}[cf. Theorem 5.1.1, \cite{York}, Theorem 3.1, \cite{ros}]\label{first}
Let $\mathcal{C}\subset\mathbb{F}[z]^{n}$ be a $(n,k,\delta)$ convolutional code.
A triple of matrices, $(K, L, M)$, is called emph{first-order representation} of the code $\mathcal{C}$ if satisfy the following conditions:
\begin{enumerate}
\item[(i)] $\mathcal{C} = \{ v(z) \in \mathbb{F}[z]^n : \exists x(z) \in \mathbb{F}[z]^{\delta} \textrm{ such that } z K x(z) + L x(z) + M v(z) = 0 \}$.
\item[(ii)] $K$ is injective.
\item[(iii)] $(K, M)$ is surjective.
\end{enumerate}
Furthermore, if $(K, L, M)$ also satisfies the additional property
\begin{itemize}
    \item[(iv)] $(zK + L, M)$ is surjective,
\end{itemize}
then the first-order representation is said to be minimal.
\end{definition}
The triple $(K, L, M)$ is unique in the following sense: if $(\widehat{K}, \widehat{L}, \widehat{M})$ is another first-order representation of the convolutional code $\mathcal{C}$, then there exist unique and invertible matrices $T$ and $S$ of the appropriate sizes such that $(\widehat{K}, \widehat{L}, \widehat{M}) = (T K S^{-1}, T L S^{-1}, T M)$.

By appropriately permuting the codewords, if necessary, $(K, L, M)$ can be transformed via elementary transformations to obtain $\Sigma$:
\begin{equation*}
K' = \left( \begin{array}{c} 
-I_{\delta} \\ 
O 
\end{array} \right), \quad 
L' = \left( \begin{array}{c} 
A_{\delta \times \delta} \\ 
C_{(n-k) \times \delta} 
\end{array} \right), \quad 
M' = \left( \begin{array}{cc} 
O & B_{\delta \times k} \\
-I_{(n-k)} & D_{(n-k) \times k} 
\end{array} \right).
\end{equation*}
Moreover, for any $T \in GL_{\delta}(\mathbb{F})$, the equality 
\begin{equation*}\label{eq:equivalentiso}
\mathcal{C}({A}, {B}, {C}, {D}) = \mathcal{C}(T A T^{-1}, T B, C T^{-1}, D)
\end{equation*}
holds true, as shown in Proposition 2.2.8 of \cite{allen}.

\begin{remark}\label{rmk:equivalent}
It is implicit in the above construction that the system $\Sigma = (A, B, C, D)$ corresponds to an equivalence class (i.e., up to permutations) of codes. Therefore, if two convolutional codes are equivalent, then their I/S/O representations are also equivalent.
\end{remark}

Minimality is one of the most important properties of an I/S/O representation of a convolutional code, as it implies greater efficiency. An I/S/O representation of a convolutional code $\mathcal{C}$ is considered minimal if it is a reachable linear dynamical system.

\begin{definition}
Let $\Sigma = (A \in \mathbb{F}^{\delta \times \delta}, B \in \mathbb{F}^{\delta \times k}, C \in \mathbb{F}^{(n-k) \times \delta}, D \in \mathbb{F}^{(n-k) \times k})$ be a linear system over $\mathbb{F}$ of dimension $\delta$. $\Sigma$ is said to be reachable if its controllability matrix
\begin{equation*}
\Phi_{\delta}(A, B) = \left( \begin{array}{ccccc}
B & AB & \ldots & A^{\delta-2}B & A^{\delta-1}B
\end{array} \right)
\end{equation*}
has full rank, i.e., $\text{rank}(\Phi_{\delta}(A, B)) = \delta$.
\end{definition}

\begin{remark}[cf. Lemma 5.3.5, \cite{York}]
\label{remarema}
Note that if $(K, L, M)$ is minimal, the surjectivity of $(zK + L, M)$ implies that $\Sigma$ is a reachable system.
\end{remark}

Since we can obtain an encoder $G(z)$ by computing a minimal basis of the free $\mathbb{F}[z]$-module
\[
\text{Ker}[zK + L | M] = \left\{ v(z) \in \mathbb{F}[z]^n \mid \exists x(z) \in \mathbb{F}[z]^{\delta} : (zK + L) x(z) + M v(z) = 0 \right\},
\]
I/S/O representations allow us to construct observable convolutional codes based on properties of the associated system: if we start with a reachable and observable linear dynamical system, the resulting convolutional code is observable (Lemma 5.3.5, \cite{York}).

\begin{definition}
Let $\Sigma = (A \in \mathbb{F}^{\delta \times \delta}, B \in \mathbb{F}^{\delta \times k}, C \in \mathbb{F}^{(n-k) \times \delta}, D \in \mathbb{F}^{(n-k) \times k})$ be a linear system over $\mathbb{F}$ of dimension $\delta$. The system $\Sigma$ is observable if $\text{rank}(\Omega_{\delta}(A, C)) = \delta$, where $\Omega_{\delta}(A, C)$ is the observability matrix defined by
\begin{equation*}
\label{matrixobs}
\Omega_{\delta}(A, C) = \left( \begin{array}{c}
C \\
CA \\
CA^2 \\
\vdots \\
CA^{\delta-1}
\end{array} \right).
\end{equation*}
\end{definition}

Furthermore, if we have an observable I/S/O representation, there exist matrices $Y(z) \in \mathbb{F}[z]^{(n-k) \times k}$ and $U(z) \in \mathbb{F}[z]^{k \times k}$ such that 
$$G(z) = \left( \begin{array}{c} Y(z) \\ U(z) \end{array} \right)$$ 
is a generator matrix for $\mathcal{C}$. Then, the transfer function of the linear system $\mathcal{T}(z) = C(zI - A)^{-1}B + D$ is equal to $Y(z)U(z)^{-1}$, and thus (see Section 6 of \cite{Conections})
$$G(z) = \left( \begin{array}{c} \mathcal{T}(z) \cdot U(z) \\ U(z) \end{array} \right)$$

\subsection{The Decoding Process Using I/S/O Representations}

The I/S/O representations of convolutional codes provide an algebraic relationship between the input sequence \( u_t \) and the output sequence \( y_t \). For any code sequence \( v_t \), which must satisfy the dynamics of the system, if we have a \( (n,k,\delta) \)-convolutional code \( \mathcal{C} \subseteq \mathbb{F}[z]^n \) described by the system \( \Sigma \), the following equation must hold (Proposition 2.6, \cite{decoding}):

\begin{equation}
\label{sistemadecoding}
\left(\begin{array}{c} 
y_{t} \\
y_{t+1} \\
\vdots \\
y_{t+L} \end{array} \right) = 
\left(\begin{array}{c} 
C \\
CA \\
\vdots \\
CA^{L} \end{array} \right) \cdot x_{t} + 
\left(\begin{array}{ccccc} 
D & 0 & \cdots & 0 & 0 \\
CB & D & \cdots & \vdots & \vdots \\
CAB & CB & D & \cdots & \vdots \\
\vdots & \vdots & \cdots & CB & D \\
CA^{L-1}B & CA^{L-2}B & \cdots & CB & D \end{array} \right) 
\cdot \left(\begin{array}{c} 
u_{t} \\
u_{t+1} \\
\vdots \\
u_{t+L} \end{array} \right)
\end{equation}
where \( L := \left\lfloor \frac{\delta}{k} \right\rfloor + \left\lfloor \frac{\delta}{n-k} \right\rfloor > 0 \). The state evolution is modeled by the following equation:
\begin{equation*}
x_{\lambda} = A^{\lambda-t} x_t + \left(\begin{array}{cccc}
A^{\lambda-t-1}B & \ldots & AB & B
\end{array} \right)
\cdot \left(\begin{array}{c} 
u_t \\
u_{t+1} \\
\vdots \\
u_{t+L} 
\end{array} \right)
\end{equation*}
where \( \lambda = t+1, t+2, \ldots, t+L \). This system can also be written as:

\begin{equation*}
\label{sistemadecoding2}
\left(\begin{array}{c} 
C \\
CA \\
\vdots \\
CA^{L} \end{array} \right) \cdot x_t + 
\left(\begin{array}{c|ccccc} 
 & D & 0 & \cdots & 0 \\
 & CB & D & \cdots & \vdots \\
 -I & CAB & CB & \cdots & \vdots \\
 & \vdots & \vdots & \cdots & 0 \\
 & CA^{L-1}B & CA^{L-2}B & \cdots & CB & D 
\end{array} \right) 
\cdot \left(\begin{array}{c} 
y_t \\
y_{t+1} \\
\vdots \\
y_{t+L} \\
\vdots \\
u_t \\
u_{t+1} \\
\vdots \\
u_{t+L}
\end{array} \right) = 0
\end{equation*}
which we denote by:

\begin{equation}
\label{isoerasure}
\Omega_{L+1}(C,A) \cdot \mathbf{x_t} + [-I | F_L] \cdot \mathbf{v_t} = 0
\end{equation}
where \( F_L \) is a block Toeplitz matrix.

Let \( u(z) \) be an information word, and \( G(z) \) an encoder for a \( (n,k) \)-convolutional code. The codeword \( v(z) = G(z) \cdot u(z) \) is transmitted through the channel, potentially with errors. If no errors occur, \( v_t \) is a valid trajectory, and the error value is zero. Otherwise, if there is an error, the received sequence is not a codeword and does not belong to the code family. The goal of the decoding process is to recover \( v_t \) from the erroneous codeword \( \widehat{v_t} \). The most well-known decoding algorithm for convolutional codes is the Viterbi algorithm, as proposed in \cite{viterbi}. However, the existence of minimal I/S/O representations allows for the development of new decoding algorithms (e.g., \cite{lieb,decodingros,paconuevo,isabeldecoding}). When the channel is noisy,  the first decoding algorithms using I/S/O representations was introduced in \cite{decodingros}. In this context, the original information sequence \( u_t \), which has been corrupted by error, is recovered more efficiently than with the Viterbi algorithm, particularly for convolutional codes based on Reed-Solomon, BCH (\cite{BCH}), or strongly MDS codes (\cite{MDS2}). For erasure channels, the problem is not that the received message \( \widehat{v_t} \) is incorrect, but that \( v_t \) is incomplete due to lost symbols. For this case, an algorithm based on I/S/O representations was proposed in \cite{tomasvirtudes,tesisvirtudes} and improved in \cite{lieb}. The advantage of this algorithm is that it reduces the decoding delay and computational effort in the erasure recovery process.

This article focuses on decoding in erasure channels and recalls the main details of the decoding algorithm presented in \cite{tesisvirtudes} for such channels. Specifically, it considers the case where both the state and the symbols in \( v_t \) that are erased can be recovered simultaneously. The algorithm requires certain assumptions on the matrix \( (\Omega_{L+1}(A,C) \mid F_L) \). Let \( \alpha = (L+1)(n-k) - \delta \) and \( \mathcal{X} \leq \binom{(L+1)n}{\alpha} \). Assume that \( \{ j_1, \ldots, j_\alpha \} \) represents the set of indices of the columns of \( [-I \mid F_L] \) corresponding to the erased symbols in \( v_t \), where \( 1 \leq j_1 < j_2 < \ldots < j_\alpha \leq \mathcal{X} \). Also, suppose that \( \{ j_1, \ldots, j_\alpha \} \subset \{ i_1, i_2, \ldots, i_{(L+1)(n-k)} \} \) with \( i_{s(n-k)} \leq sn \) for \( s = 1, 2, \ldots, (L+1)(n-k) \).

Let \( U_{j_1, \ldots, j_\alpha} \) be the subspace spanned by the columns indexed by \( \{ j_1, \ldots, j_\alpha \} \), and let \( \langle \Omega_{L+1}(A,C) \rangle = \text{colspa}(\Omega_{L+1}(A,C)) \).

\begin{theorem}[cf. Theorem 4.3., \cite{tesisvirtudes}]
If for all such subspaces \( U_{j_1, \ldots, j_\alpha} \), the condition \( U_{j_1, \ldots, j_\alpha} \oplus \langle \Omega_{L+1}(A,C) \rangle = \mathbb{F}^{(L+1)(n-k)} \) holds, then it is possible to recalculate the state of the system (equation \ref{isoerasure}) when we observe a window of length \( (L+1)n \), provided that no more than \( (L+1)(n-k) - \delta \) erasures occur.
\end{theorem}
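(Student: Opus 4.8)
The plan is to recast the erasure-recovery step as a single square system of linear equations, and then to recognise the hypothesis of the theorem as exactly the statement that the coefficient matrix of that system is invertible.

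First I would fix a window $\mathbf{v_t}=(y_t,\dots,y_{t+L},u_t,\dots,u_{t+L})^{\top}\in\mathbb{F}^{(L+1)n}$ subject to \eqref{isoerasure}, and assume to begin with that exactly $\alpha=(L+1)(n-k)-\delta$ of its scalar entries are erased, at positions $j_1<\dots<j_\alpha$. Collecting the $\alpha$ columns of $[-I\mid F_L]$ indexed by $\{j_1,\dots,j_\alpha\}$ into a matrix $E\in\mathbb{F}^{(L+1)(n-k)\times\alpha}$, and letting $b\in\mathbb{F}^{(L+1)(n-k)}$ be the (known) contribution of the surviving entries of $\mathbf{v_t}$ to $[-I\mid F_L]\cdot\mathbf{v_t}$, equation \eqref{isoerasure} becomes
\begin{equation*}
\Omega_{L+1}(C,A)\cdot\mathbf{x_t}+E\cdot\mathbf{w}=-b,
\end{equation*}
where $\mathbf{w}\in\mathbb{F}^{\alpha}$ lists the erased symbols and, by definition, $\text{colspa}(E)=U_{j_1,\dots,j_\alpha}$. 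The unknowns here are the $\delta$ entries of $\mathbf{x_t}$ together with the $\alpha$ entries of $\mathbf{w}$, i.e.\ $\delta+\alpha=(L+1)(n-k)$ scalars, and there are exactly $(L+1)(n-k)$ scalar equations; hence $\mathcal{M}:=\bigl(\Omega_{L+1}(C,A)\mid E\bigr)$ is a square matrix of order $(L+1)(n-k)$, and the state can be recovered (together, in fact, with the erased symbols) precisely when $\mathcal{M}$ is nonsingular.

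The heart of the argument is then to read the hypothesis $U_{j_1,\dots,j_\alpha}\oplus\langle\Omega_{L+1}(A,C)\rangle=\mathbb{F}^{(L+1)(n-k)}$ as exactly this invertibility. From the direct-sum decomposition one obtains $\dim\langle\Omega_{L+1}(A,C)\rangle+\dim U_{j_1,\dots,j_\alpha}=(L+1)(n-k)=\delta+\alpha$; since always $\dim\langle\Omega_{L+1}(A,C)\rangle\le\delta$ and $\dim U_{j_1,\dots,j_\alpha}\le\alpha$, both inequalities are forced to be equalities, so $\Omega_{L+1}(C,A)$ and $E$ each have full column rank (the former making $(C,A)$ observable, the latter saying the erased columns are linearly independent). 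Moreover, if $(\mathbf{x},\mathbf{w})\in\ker\mathcal{M}$, then $\Omega_{L+1}(C,A)\cdot\mathbf{x}=-E\cdot\mathbf{w}$ lies in $\langle\Omega_{L+1}(A,C)\rangle\cap U_{j_1,\dots,j_\alpha}=\{0\}$, whence $\mathbf{x}=0$ and $\mathbf{w}=0$ by the two full-rank facts just established. Thus $\mathcal{M}$ is invertible, the displayed system has the transmitted trajectory as its unique solution, and its first $\delta$ coordinates return $\mathbf{x_t}$.

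It remains to deal with the case of $\beta<\alpha$ erasures. Here I would use the positional hypothesis $i_{s(n-k)}\le sn$ to enlarge the set of $\beta$ erased indices to an admissible set $\{j_1,\dots,j_\alpha\}$ of cardinality $\alpha$ still covered by the theorem's assumption; the matrix $\mathcal{M}$ for this enlarged pattern is invertible by the previous paragraph, and the coefficient matrix of the actual $\beta$-erasure system is obtained from $\mathcal{M}$ by deleting $\alpha-\beta$ of its columns, hence still has full column rank, so the state is again uniquely determined. I expect the only genuinely delicate point to be this padding step — verifying that every admissible sparse erasure pattern extends to a maximal admissible one inside the family governed by the hypothesis; once that bookkeeping and the identification of the two column spans are in place, the linear-algebra core is immediate from the definition of a direct sum.
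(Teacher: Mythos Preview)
The paper does not supply a proof of this theorem: it is quoted, with the attribution ``cf.\ Theorem 4.3., \cite{tesisvirtudes}'', and the text moves on immediately to Definition~\ref{def:GDP}. There is therefore no in-paper argument to compare against.

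On its own merits your argument is sound and is exactly the natural one. Rewriting \eqref{isoerasure} as $\mathcal{M}\cdot\binom{\mathbf{x_t}}{\mathbf{w}}=-b$ with $\mathcal{M}=(\Omega_{L+1}(C,A)\mid E)$ square of order $(L+1)(n-k)$, and then reading the direct-sum hypothesis as nonsingularity of $\mathcal{M}$, is correct: the dimension count forces both blocks to have full column rank, and the trivial intersection kills $\ker\mathcal{M}$. For $\beta<\alpha$ erasures your reduction is also right---once an admissible extension to $\alpha$ indices exists, the columns of the smaller coefficient matrix are a subset of those of an invertible $\mathcal{M}$ and hence independent. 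You have already put your finger on the only soft spot: verifying that an arbitrary $\beta$-pattern with $\beta<\alpha$ can be completed to an $\alpha$-pattern still obeying the positional constraint $i_{s(n-k)}\le sn$. This is a combinatorial bookkeeping fact (one just fills in the missing indices greedily within each block of length $n$), not a conceptual obstacle, and once stated it closes the proof.
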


We denote by \( F_\alpha \) the matrix formed by the columns indexed by \( \{ j_1, \ldots, j_\alpha \} \). These columns are part of a non-trivially zero minor of the matrix \( [-I \mid F_L] \). 

\begin{definition}
\label{def:GDP}
A convolutional code is said to have the Good Decodable Property (GDP) for an erasure channel if it is constructed using an I/S/O representation such that \( [ \Omega_{L+1}(A,C) \mid F_\alpha ] \) is full rank for \( \alpha = (L+1)(n-k) - \delta \).
\end{definition}

As a summary, in order to construct observable and GDP convolutional codes using I/S/O representations, the following conditions must be satisfied:
\begin{enumerate}
\item[a)] The system \( \Sigma = (A,B,C,D) \) must be reachable.
\item[b)] The system \( \Sigma \) must be observable.
\item[c)] The columns of \( \Omega_{L+1}(A,C) \) and any \( \alpha \) columns of \( [-I \mid F_L] \) must be linearly independent, i.e., \( [ \Omega_{L+1}(A,C) \mid F_L ] \) must have full rank. Since \( \Sigma \) is observable, \( \Omega_{L+1}(A,C) \) has full rank, and if \( [\Omega_{L+1}(A,C) \mid F_L] \) is full rank, then \( [ \Omega_{L+1}(A,C) \mid F_\alpha ] \) is full rank.
\end{enumerate}

If \( \Sigma \) satisfies these conditions, the associated convolutional code will be observable and GDP. This article explores alternative methods for obtaining I/S/O representations that satisfy these conditions via group actions.

\section{Construction of observable and  GDP convolutional codes by I/S/O representations}
Let $\Sigma = (A \in \mathbb{F}^{\delta \times \delta}, B \in \mathbb{F}^{\delta \times k}, C \in \mathbb{F}^{(n-k) \times \delta}, D \in \mathbb{F}^{(n-k) \times k})$ be a minimal I/S/O representation of a convolutional code $\mathcal{C}$. By assumption, $\Sigma$ is reachable. We further assume that it is observable in order to facilitate the construction of observable convolutional codes. Additionally, $\Sigma$ satisfies the condition that $[\Omega_{L+1}(A,C) \mid F_{L}]$ is full rank, ensuring that $\mathcal{C}$ has the Good Decodable Property (GDP).

We now consider the following group transformations applied to $\Sigma$:
\begin{enumerate}
\item Group actions in the state vector: $\Sigma_1 = (A_1, B_1, C_1, D_1) = (S^{-1}AS, S^{-1}B, CS, D)$, where $S$ is an invertible matrix in $\mathbb{F}^{\delta \times \delta}$.
\item Group actions in the parity vector: $\Sigma_2 = (A_2, B_2, C_2, D_2) = (A, BQ, C, DQ)$, where $Q$ is an invertible matrix in $\mathbb{F}^{k \times k}$.
\item Group actions in the information vector: $\Sigma_3 = (A_3, B_3, C_3, D_3) = (A, B, H^{-1}C, H^{-1}D)$, where $H$ is an invertible matrix in $\mathbb{F}^{(n-k) \times (n-k)}$.
\end{enumerate}

We denote by $\mathcal{C}_i$ the associated convolutional code obtained by taking $\Sigma_i$ as the I/S/O representation, for $i = 1, 2, 3$.


\begin{proposition}
\label{reacha}
Let $\Sigma = (A \in \mathbb{F}^{\delta \times \delta}, B \in \mathbb{F}^{\delta \times k}, C \in \mathbb{F}^{(n-k) \times \delta}, D \in \mathbb{F}^{(n-k) \times k})$ be a linear system. The reachability of $\Sigma$ is invariant under the following group actions:
\begin{enumerate}
\item[i)] Group actions in the state vector: $\Sigma =(A, B, C, D) \mapsto \Sigma_{1}=(S^{-1}AS, S^{-1}B, CS, D)$ for some invertible matrix $S \in \mathbb{F}^{\delta \times \delta}$.
\item[ii)] Group actions in the parity vector: $\Sigma=(A, B, C, D) \mapsto \Sigma_{2}=(A, BQ, C, DQ)$ for some invertible matrix $Q \in \mathbb{F}^{k \times k}$. 
\item[iii)] Group actions in the information vector: $\Sigma=(A, B, C, D) \mapsto \Sigma_{3}=(A, B, H^{-1}C, H^{-1}D)$ for some invertible matrix $H \in \mathbb{F}^{(n-k) \times (n-k)}$. 
\end{enumerate}
\end{proposition}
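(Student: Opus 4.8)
The plan is to verify, for each of the three group actions, that the controllability matrix of the transformed system has the same rank as that of the original system $\Sigma$, since reachability is defined precisely by the condition $\operatorname{rank}(\Phi_\delta(A,B)) = \delta$. All three verifications are short linear-algebra computations; the only mild subtlety is the parity-vector action, so I would order the cases so that the cleanest one comes first.

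For case (i), I would compute $\Phi_\delta(S^{-1}AS, S^{-1}B)$ block by block. The $j$-th block is $(S^{-1}AS)^{j}S^{-1}B = S^{-1}A^{j}S S^{-1}B = S^{-1}A^{j}B$, using that $(S^{-1}AS)^{j} = S^{-1}A^{j}S$ by telescoping the conjugation. Hence
\[
\Phi_\delta(S^{-1}AS, S^{-1}B) = S^{-1}\,\Phi_\delta(A,B),
\]
and since $S^{-1}$ is invertible, $\operatorname{rank}(\Phi_\delta(S^{-1}AS,S^{-1}B)) = \operatorname{rank}(\Phi_\delta(A,B)) = \delta$, so $\Sigma_1$ is reachable.

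For case (iii), the matrices $A$ and $B$ are unchanged by the action, so $\Phi_\delta(A_3,B_3) = \Phi_\delta(A,B)$ and reachability is trivially preserved; I would dispatch this in one line. For case (ii), the controllability matrix of $\Sigma_2 = (A, BQ, C, DQ)$ is
\[
\Phi_\delta(A, BQ) = \bigl(\, BQ \ \ ABQ \ \ \cdots \ \ A^{\delta-1}BQ \,\bigr) = \Phi_\delta(A,B)\cdot (I_\delta \otimes Q),
\]
where $I_\delta \otimes Q$ denotes the block-diagonal matrix with $\delta$ copies of the invertible $k\times k$ matrix $Q$ along the diagonal; equivalently one observes that right-multiplying each block $A^{j}B$ by the invertible $Q$ performs invertible column operations on $\Phi_\delta(A,B)$ within each block of $k$ columns. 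Either way the post-multiplying matrix is invertible, so $\operatorname{rank}(\Phi_\delta(A,BQ)) = \operatorname{rank}(\Phi_\delta(A,B)) = \delta$, giving reachability of $\Sigma_2$. The main (and only) obstacle is bookkeeping: making sure the block structure of $\Phi_\delta$ is respected when the right factor $Q$ acts on the $k$-column blocks, and that in case (i) the conjugation identity $(S^{-1}AS)^j = S^{-1}A^jS$ is invoked correctly; neither is deep, so the proof is essentially three one-line rank computations.
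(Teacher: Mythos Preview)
Your proof is correct and follows essentially the same approach as the paper: in each case you identify the same factorization of the transformed controllability matrix ($S^{-1}\Phi_\delta(A,B)$ in (i), $\Phi_\delta(A,B)\cdot\mathrm{diag}(Q,\ldots,Q)$ in (ii), and $\Phi_\delta(A,B)$ itself in (iii)) and conclude by invariance of rank under multiplication by invertible matrices. The only cosmetic difference is your use of the Kronecker notation $I_\delta\otimes Q$ for the block-diagonal right factor, which is equivalent to the paper's $\mathrm{diag}(Q,\ldots,Q)$.
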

\begin{proof}
We assume that $\Sigma = (A, B, C, D)$ is reachable, i.e., $\text{rank}(\Phi_{\delta}(A, B)) = \delta$.

(i) Since $\Phi_{\delta}(A_1, B_1) = S^{-1} \cdot \Phi_{\delta}(A, B)$, it follows that $\text{rank}(\Phi_{\delta}(A_1, B_1)) = \text{rank}(\Phi_{\delta}(A, B)) = \delta$.

(ii) Since $\Phi_{\delta}(A_2, B_2) = \Phi_{\delta}(A, B) \cdot \text{diag}(Q, Q, \ldots, Q)$, it follows that $\text{rank}(\Phi_{\delta}(A_2, B_2)) = \text{rank}(\Phi_{\delta}(A, B)) = \delta$.

(iii) Since $\Phi_{\delta}(A_3, B_3) = \Phi_{\delta}(A, B)$, it follows that $\text{rank}(\Phi_{\delta}(A_3, B_3)) = \text{rank}(\Phi_{\delta}(A, B)) = \delta$.
\end{proof}
\begin{corollary}
If $\Sigma$ is a minimal I/S/O representation of a convolutional code $\mathcal{C}$ over $\mathbb{F}$, then $\Sigma_{i}$ is an I/S/O representations of the convolutional code $\mathcal{C}_{i}$ for $i=1,2,3$ over $\mathbb{F}$.
\end{corollary}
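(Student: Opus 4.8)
The plan is to derive the corollary directly from Proposition~\ref{reacha} together with the correspondence between reachable linear systems and $(n,k,\delta)$-convolutional codes recalled in Section~2. The first step would be to check that the three group actions preserve the format of an I/S/O representation: since $S$, $Q$, $H$ are square invertible matrices of sizes $\delta$, $k$ and $n-k$ respectively, in each case $A_i\in\mathbb{F}^{\delta\times\delta}$, $B_i\in\mathbb{F}^{\delta\times k}$, $C_i\in\mathbb{F}^{(n-k)\times\delta}$ and $D_i\in\mathbb{F}^{(n-k)\times k}$, so every $\Sigma_i$ is again a linear system of dimension $\delta$ with $k$ inputs and $n-k$ outputs, and the parameters $(n,k,\delta)$ are unchanged.

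Next I would use that a minimal I/S/O representation is precisely a reachable one, so $\text{rank}(\Phi_\delta(A,B))=\delta$; by Proposition~\ref{reacha} the reachability is preserved, hence $\text{rank}(\Phi_\delta(A_i,B_i))=\delta$ for $i=1,2,3$. It then remains to show that the set $\mathcal{C}_i$ of finite-support trajectories of $\Sigma_i$ is a genuine convolutional code and that $\Sigma_i$ is a minimal I/S/O representation of it. For this I would run the construction of Section~2 in reverse: from $(A_i,B_i,C_i,D_i)$ assemble the first-order triple $(K_i',L_i',M_i')$ in the displayed normal form. Conditions (ii) and (iii) of Definition~\ref{first} --- injectivity of $K_i'$ and surjectivity of $(K_i',M_i')$ --- are structural, depending only on the shape of the normal form (the $-I_\delta$ and $-I_{n-k}$ blocks) and not on the entries of $A_i,B_i,C_i,D_i$, so they are inherited automatically; condition (i) holds because the normal form is by construction equivalent to the I/S/O description defining $\mathcal{C}_i$; and condition (iv), the surjectivity of $(zK_i'+L_i',M_i')$, is exactly the reachability of $\Sigma_i$ by Remark~\ref{remarema}. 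Thus $(K_i',L_i',M_i')$ is a minimal first-order representation, and by the theory of \cite{York,ros}, $\text{Ker}[zK_i'+L_i'\mid M_i']$ is a free $\mathbb{F}[z]$-submodule of $\mathbb{F}[z]^n$ of rank $k$ with internal degree $\delta$; that is, $\mathcal{C}_i$ is an $(n,k,\delta)$-convolutional code represented minimally by $\Sigma_i$.

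The only delicate point I would want to flag is the one already recorded in Remark~\ref{rmk:equivalent}: passing between first-order representations and I/S/O systems may require permuting the $n$ coordinates, so $\mathcal{C}_i$ is well-defined only up to the permutation equivalence of convolutional codes; this does not affect the statement, which merely asserts that $\Sigma_i$ is an I/S/O representation of a convolutional code $\mathcal{C}_i$. I do not expect a genuine obstacle here: reachability is the sole property needed, and it is furnished by Proposition~\ref{reacha}. If anything, the one thing to be careful about is verifying that the matrix sizes are preserved and that conditions (i)--(iii) of Definition~\ref{first} are inherited for purely structural reasons, so that only condition (iv), which is the reachability, carries any real content.
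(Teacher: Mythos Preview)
Your proposal is correct and follows essentially the same approach as the paper: the paper's proof is the two-line observation that minimality of an I/S/O representation means reachability, and then Proposition~\ref{reacha} gives reachability of each $\Sigma_i$. Your version simply spells out, via the first-order triple $(K_i',L_i',M_i')$ and Definition~\ref{first}, why reachability of $\Sigma_i$ suffices for $\mathcal{C}_i$ to be an $(n,k,\delta)$-convolutional code with $\Sigma_i$ as a minimal I/S/O representation---details the paper leaves implicit.
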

\begin{proof}
It follows from that if $\Sigma$ is a minimal I/S/O representation of a convolutional code $\mathcal{C}$ over $\mathbb{F}$, then $\Sigma$ is reachable. Now, the result follows from Proposition \ref{reacha}.
\end{proof}

\begin{example}
\label{example1}
Let $\Sigma$ be the following system over $\mathbb{Z}_3$ with $\delta = 3$, $k = 2$, and $n = 3$:
\begin{equation*}
\Sigma = [A =
\begin{pmatrix}
0 & 1 & 0\\
2 & 1 & 0 \\
2 & 1 & 0
\end{pmatrix},
B = \begin{pmatrix}
0 & 0\\
0 & 2 \\
1 & 0
\end{pmatrix},
C = \begin{pmatrix}
1 & 1 & 2
\end{pmatrix},
D = \begin{pmatrix}
1 & 1
\end{pmatrix}].
\end{equation*}
The system $\Sigma$ is reachable because
\begin{equation*}
\Phi_3(A, B) = \begin{pmatrix}
B & AB & A^2B
\end{pmatrix} = \begin{pmatrix}
0 & 0 & 0 & 2 & 0 & 2 \\
0 & 2 & 0 & 2 & 0 & 2 \\
1 & 0 & 0 & 2 & 0 & 2
\end{pmatrix}
\end{equation*}
has full rank. Thus, we can take $\Sigma$ as a minimal I/S/O representation of a convolutional code. An associated encoder for the convolutional code $\mathcal{C}$ constructed by $\Sigma$ is
\begin{equation*}
G(z) = \begin{pmatrix}
1 + z + 2z^2 & 2 + z\\
1 + 2z & z \\
2 + z + 2z^2 & 0
\end{pmatrix}.
\end{equation*}
Let $Q = \begin{pmatrix}
1 & 1\\
1 & 2
\end{pmatrix}$ be an invertible matrix over $\mathbb{Z}_3$. We apply the group transformation $\Sigma_2 = (A_2, B_2, C_2, D_2) = (A, BQ, C, DQ)$ to $\Sigma$. Then,
\begin{equation*}
\Sigma_2 = [A_2 =
\begin{pmatrix}
0 & 1 & 0\\
2 & 1 & 0 \\
2 & 1 & 0
\end{pmatrix},
B_2 = \begin{pmatrix}
0 & 0\\
2 & 1 \\
1 & 1
\end{pmatrix},
C_2 = \begin{pmatrix}
1 & 1 & 2
\end{pmatrix},
D_2 = \begin{pmatrix}
2 & 0
\end{pmatrix}].
\end{equation*}
The transformed system $\Sigma_2$ is also reachable because
\begin{equation*}
\Phi_3(A_2, B_2) = \begin{pmatrix}
B_2 & A_2B_2 & A_2^2B_2
\end{pmatrix} = \begin{pmatrix}
0 & 0 & 2 & 1 & 2 & 1 \\
2 & 1 & 2 & 1 & 0 & 0 \\
1 & 1 & 2 & 1 & 0 & 0
\end{pmatrix}
\end{equation*}
has full rank. Thus, we can take $\Sigma_2$ as a minimal I/S/O representation to compute an encoder for the convolutional code $\mathcal{C}_2$,
\begin{equation*}
G_2(z) = \begin{pmatrix}
1 + z + z^2 & 2 + z\\
z^2 & 2z \\
1 + z + z^2 & 1
\end{pmatrix}.
\end{equation*}
Let $H = \begin{pmatrix}
2
\end{pmatrix}$ be an invertible matrix in $\mathbb{Z}_3^{1 \times 1}$. We apply the group transformation $\Sigma_3 = (A_3, B_3, C_3, D_3) = (A, B, H^{-1}C, H^{-1}D)$ to $\Sigma$. Then,
\begin{equation*}
\Sigma_3 = [A_3 =
\begin{pmatrix}
0 & 1 & 0\\
2 & 1 & 0 \\
2 & 1 & 0
\end{pmatrix},
B_3 = \begin{pmatrix}
0 & 0\\
0 & 2 \\
1 & 0
\end{pmatrix},
C_3 = \begin{pmatrix}
2 & 2 & 1
\end{pmatrix},
D_3 = \begin{pmatrix}
2 & 2
\end{pmatrix}].
\end{equation*}
The transformed system $\Sigma_3$ is also reachable because $\Phi_3(A_3, B_3) = \Phi_3(A, B)$ has full rank. Thus, we can take $\Sigma_3$ as a minimal I/S/O representation to compute an encoder for the convolutional code $\mathcal{C}_3$,
\begin{equation*}
G_3(z) = \begin{pmatrix}
2 + 2z + z^2 & 1 + 2z\\
1+2z & z \\
2 + z + 2z^2 & 0
\end{pmatrix}.
\end{equation*}
\end{example}
\begin{proposition}
\label{obs}
Let $\Sigma= (A \in \mathbb{F}^{\delta \times \delta}, B \in \mathbb{F}^{\delta \times k}, C \in \mathbb{F}^{(n-k) \times \delta}, D \in \mathbb{F}^{(n-k) \times k})$ be a linear dynamical system. The observability of $\Sigma$ is invariant under the following group actions:
\begin{enumerate}
\item[i)] Group actions in the state vector: $\Sigma =(A, B, C, D) \mapsto \Sigma_{1}=(S^{-1}AS, S^{-1}B, CS, D)$ for some invertible matrix $S \in \mathbb{F}^{\delta \times \delta}$.
\item[ii)] Group actions in the parity vector: $\Sigma=(A, B, C, D) \mapsto \Sigma_{2}=(A, BQ, C, DQ)$ for some invertible matrix $Q \in \mathbb{F}^{k \times k}$. 
\item[iii)] Group actions in the information vector: $\Sigma=(A, B, C, D) \mapsto \Sigma_{3}=(A, B, H^{-1}C, H^{-1}D)$ for some invertible matrix $H \in \mathbb{F}^{(n-k)\times (n-k)}$.
\end{enumerate}
\end{proposition}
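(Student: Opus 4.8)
The plan is to mirror the proof of Proposition \ref{reacha}, using the duality between reachability and observability: whereas $\Phi_\delta(A,B)$ is assembled from the column blocks $A^jB$, the observability matrix $\Omega_\delta(A,C)$ is assembled by stacking the row blocks $CA^j$. In each of the three cases I will exhibit an invertible matrix relating $\Omega_\delta(A_i,C_i)$ to $\Omega_\delta(A,C)$, so that the full-rank condition $\text{rank}\,\Omega_\delta(A,C) = \delta$ is preserved. The only elementary fact needed is that a change of basis on the state, $A_1 = S^{-1}AS$, satisfies $A_1^{j} = S^{-1}A^{j}S$, and that the observability data transform only through $A$ and $C$.

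For (i) we have $A_1 = S^{-1}AS$ and $C_1 = CS$, hence $C_1 A_1^{j} = (CS)(S^{-1}A^{j}S) = CA^{j}S$ for all $j$; stacking these blocks yields $\Omega_\delta(A_1,C_1) = \Omega_\delta(A,C)\cdot S$, and invertibility of $S$ gives $\text{rank}\,\Omega_\delta(A_1,C_1) = \text{rank}\,\Omega_\delta(A,C) = \delta$. For (ii), since $A_2 = A$ and $C_2 = C$, one has $\Omega_\delta(A_2,C_2) = \Omega_\delta(A,C)$ verbatim, so observability is immediate; this is the counterpart of the identity $\Phi_\delta(A_3,B_3) = \Phi_\delta(A,B)$ used in Proposition \ref{reacha}. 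For (iii), $A_3 = A$ and $C_3 = H^{-1}C$, so $C_3 A_3^{j} = H^{-1}CA^{j}$, and stacking gives $\Omega_\delta(A_3,C_3) = \text{diag}(H^{-1},\ldots,H^{-1})\cdot \Omega_\delta(A,C)$; since $\text{diag}(H^{-1},\ldots,H^{-1})$ is invertible, again $\text{rank}\,\Omega_\delta(A_3,C_3) = \delta$.

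I do not expect any real obstacle here: the argument is a routine dualization of the proof of Proposition \ref{reacha}. The only points requiring care are keeping track of the side on which the invertible matrix acts — on the right by $S$ in case (i), on the left by the block-diagonal matrix $\text{diag}(H^{-1},\ldots,H^{-1})$ in case (iii) — and observing that in case (ii) the observability data are literally unchanged, because the parity-vector action modifies only $B$ and $D$.
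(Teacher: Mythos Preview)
Your proof is correct and follows essentially the same approach as the paper: in each case you exhibit the same invertible matrix relating $\Omega_\delta(A_i,C_i)$ to $\Omega_\delta(A,C)$ --- right-multiplication by $S$ in (i), the identity in (ii), and left-multiplication by $\mathrm{diag}(H^{-1},\ldots,H^{-1})$ in (iii) --- and conclude that the rank is preserved. If anything, your version is slightly more explicit, since you justify the block identity $C_1A_1^{\,j}=CA^{j}S$ before stacking.
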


\begin{proof}
Assume that $\Sigma=(A,B,C,D)$ is observable, i.e., $rank \, \Omega_{\delta}(A,C) = \delta$.

(i) Since $\Omega_{\delta}(A_{1},C_{1}) = \Omega_{\delta}(A,C) \cdot S$, we have $rank(\Omega_{\delta}(A_{1},C_{1})) = rank(\Omega_{\delta}(A,C)) = \delta$.

(ii) Since $\Omega_{\delta}(A_{2},C_{2}) = \Omega_{\delta}(A,C)$, it follows that $rank(\Omega_{\delta}(A_{2},C_{2})) = rank(\Omega_{\delta}(A,C)) = \delta$.

(iii) Since $\Omega_{\delta}(A_{3},C_{3}) = diag(H^{-1}, \ldots, H^{-1}) \cdot \Omega_{\delta}(A,C)$, we conclude that $rank(\Omega_{\delta}(A_{3},C_{3})) = rank(\Omega_{\delta}(A,C)) = \delta$.
\end{proof}
\begin{corollary}
If $\Sigma$ is a minimal and observable I/S/O representation of an observable convolutional code $\mathcal{C}$ over $\mathbb{F}$, $\Sigma_{i}$ for $i=1,2,3$ can be considered observable I/S/O representations for observable convolutional codes $\mathcal{C}_{i}$ over $\mathbb{F}$.
\end{corollary}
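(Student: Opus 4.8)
The plan is to chain together the two invariance results just proved with the structural fact—recalled in the excerpt from Lemma 5.3.5 of \cite{York}—that a convolutional code built from a reachable \emph{and} observable linear dynamical system is itself observable. So the corollary is essentially the conjunction of Proposition \ref{reacha}, Proposition \ref{obs}, and that lemma, together with the already-established fact that $\Sigma_i$ genuinely represents a code $\mathcal{C}_i$.

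First I would recall the previous corollary (the consequence of Proposition \ref{reacha}): since $\Sigma$ is minimal it is reachable, and reachability is invariant under each of the three group actions, so each $\Sigma_i$ is reachable and hence is a minimal I/S/O representation of the convolutional code $\mathcal{C}_i$. Because the three actions change none of the matrix dimensions, $\mathcal{C}_i$ is again an $(n,k,\delta)$-convolutional code with the same parameters as $\mathcal{C}$. Second, I would apply Proposition \ref{obs}: by hypothesis $\mathrm{rank}\,\Omega_\delta(A,C)=\delta$, and observability of the system is invariant under the three actions, so $\mathrm{rank}\,\Omega_\delta(A_i,C_i)=\delta$ for $i=1,2,3$; that is, each $\Sigma_i$ is an observable system. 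Finally, combining these two points, each $\Sigma_i$ is a reachable and observable linear dynamical system, so by Lemma 5.3.5 of \cite{York} the associated code $\mathcal{C}_i$ is observable, which is the claim.

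The proof is a bookkeeping exercise rather than a computation, so there is no genuine obstacle. The only point that deserves a word of care is the distinction between observability of the linear system $\Sigma_i$ (the rank condition on $\Omega_\delta(A_i,C_i)$, handled by Proposition \ref{obs}) and observability of the code $\mathcal{C}_i$ (exactness of the syndrome sequence for $\mathcal{C}_i$); the bridge between the two notions is precisely Lemma 5.3.5 of \cite{York}, and it applies here because we have simultaneously verified reachability and observability of $\Sigma_i$. I would also remark, as noted in Remark \ref{remarema}, that minimality of the first-order representation is what guarantees $\Sigma_i$ is reachable in the first place, so nothing extra is needed beyond the two propositions.
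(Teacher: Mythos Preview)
Your argument is correct and follows the same route as the paper: invoke the invariance of observability (Proposition \ref{obs}) on top of the already-established reachability of $\Sigma_i$, and then pass from system observability to code observability via Lemma 5.3.5 of \cite{York}. The paper's own proof is a single sentence (``It follows from the fact that $\Sigma$ is an observable linear system''), so your version simply spells out the chain of implications that the paper leaves implicit.
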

\begin{proof}
It follows from the fact that $\Sigma$ is an observable linear system.
\end{proof}

\begin{example}
Let $\Sigma$ and $\Sigma_{2}$ be the systems defined over $\mathbb{Z}_{3}$ as given in Example \ref{example1}.
The system $\Sigma$ is observable because 
\[
\Omega_{3}(A,C) = \left(\begin{array}{c}
C \\
CA \\
CA^{2}
\end{array}\right) = \left(\begin{array}{ccc}
1 & 1 & 2 \\
0 & 1 & 0 \\
2 & 1 & 0 
\end{array}\right)
\]
has full rank. Therefore, $\mathcal{C}$ is an observable convolutional code. Similarly, the system $\Sigma_{2}$ is also observable because 
\[
\Omega_{3}(A_{2},C_{2}) = \left(\begin{array}{c}
C_{2} \\
C_{2}A_{2} \\
C_{2}A_{2}^{2}
\end{array}\right) = \Omega_{3}(A,C).
\]
Thus, $\mathcal{C}_{2}$ is an observable convolutional code.

On the other hand, let $\Sigma$ and $\Sigma_{3}$ be the systems defined over $\mathbb{Z}_{3}$ as given in Example \ref{example1}. The system $\Sigma_{3}$ is observable because the observability matrix
\[
\Omega_{3}(A_{3},C_{3}) = \left(\begin{array}{c}
C_{3} \\
C_{3}A_{3} \\
C_{3}A_{3}^{2}
\end{array}\right) = \left(\begin{array}{ccc}
2 & 2 & 1\\
0 & 2 & 0\\
1 & 2 & 0
\end{array}\right)
\]
has full rank. Thus, $\mathcal{C}_{3}$ is an observable convolutional code.
\end{example}

The first question that arises is whether the convolutional codes $\mathcal{C}_{i}$ are equivalent to $\mathcal{C}$. In fact, this is not our primary concern, as we aim to obtain the largest possible number of distinct convolutional codes with good properties. Trivially, we know that when the first transformation is applied to the representation, the resulting I/S/O systems are equivalent by similarity. Consequently, the codes they generate are equal. However, this statement does not hold for transformations obtained through $\Sigma_{2}$ and $\Sigma_{3}$. Let us examine an example below.

\begin{example}
Let $\Sigma$ and $\Sigma_{2}$ be the I/S/O representations given in Example \ref{example1}. Let 
\[
G(z) = \left(\begin{array}{cc}
1 + z + 2z^2 & 2 + z \\
1 + 2z & z \\
2 + z + 2z^2 & 0
\end{array}\right)
\quad \textrm{and} \quad 
G_{2}(z) = \left(\begin{array}{cc}
1 + z + z^2 & 2 + z \\
z^2 & 2z \\
1 + z + z^2 & 1
\end{array}\right)
\]
be the encoders for $\mathcal{C}$ and $\mathcal{C}_{2}$, respectively, obtained from $\Sigma$ and $\Sigma_{2}$. 
Then, $\mathcal{C}$ is not equivalent to $\mathcal{C}_{2}$. Let us consider this. Let $v(z) \in \mathbb{Z}_{3}[z]^{3}$ be the codeword defined as 
$$v(z) = \left(\begin{array}{c}
1 + z + z^2 \\
z^2 \\
1 + z + z^2
\end{array}\right).$$ 
Clearly, $v(z) \in \mathcal{C}_{2}$, but $v(z) \notin \mathcal{C}$. If $v(z) \in \mathcal{C}$, then there would exist 
$$x(z) = \left(\begin{array}{c}
a(z) \\
b(z)
\end{array}\right) \in \mathbb{Z}_{3}[z]^{2}$$ 
such that 
\[
\left(\begin{array}{cc}
1 + z + 2z^2 & 2 + z \\
1 + 2z & z \\
2 + z + 2z^2 & 0
\end{array}\right)
\cdot \left(\begin{array}{c}
a(z) \\
b(z)
\end{array}\right)
= \left(\begin{array}{c}
1 + z + z^2 \\
z^2 \\
1 + z + z^2
\end{array}\right).
\]
Solving this equation, we find that $(2 + z + 2z^2) \cdot a(z) = 1 + z + z^2$, implying $a(z) = \lambda \in \mathbb{Z}_{3}$. However, $(2 + z + 2z^2) \cdot \lambda = 1 + z + z^2$ has no solution, so $v(z) \notin \mathcal{C}$. This can also be seen for words $\tilde{v(z)}$ obtained from $v(z)$ by permuting is components.
Thus, the codes are not equivalent.
\end{example}

\begin{example}
Let $\Sigma$ and $\Sigma_{3}$ be the I/S/O representations given in Example \ref{example1}. Let 
\[
G(z) = \left(\begin{array}{cc}
1 + z + 2z^2 & 2 + z \\
1 + 2z & z \\
2 + z + 2z^2 & 0
\end{array}\right)
\quad \textrm{and} \quad 
G_{3}(z) = \left(\begin{array}{cc}
2 + 2z + z^2 & 1 + 2z \\
1+2z & z \\
2 + z + 2z^2 & 0
\end{array}\right)
\]
be the encoders for $\mathcal{C}$ and $\mathcal{C}_{3}$, respectively, obtained from $\Sigma$ and $\Sigma_{3}$. Then, $\mathcal{C}$ is not equivalent to $\mathcal{C}_{3}$. Let us consider this. Let $v(z) \in \mathbb{Z}_{3}[z]^{3}$ be the codeword defined as 
$$v(z) = \left(\begin{array}{c}
2 + 2z + z^2 \\
1+2z \\
2 + z + 2z^2
\end{array}\right).$$ 
Clearly, $v(z) \in \mathcal{C}_{3}$, but $v(z) \notin \mathcal{C}$. If $v(z) \in \mathcal{C}$, then there would exist 
$$x(z) = \left(\begin{array}{c}
a(z) \\
b(z)
\end{array}\right) \in \mathbb{Z}_{3}[z]^{2}$$ 
such that 
\[
\left(\begin{array}{cc}
1 + z + 2z^2 & 2 + z \\
1 + 2z & z \\
2 + z + 2z^2 & 0
\end{array}\right)
\cdot \left(\begin{array}{c}
a(z) \\
b(z)
\end{array}\right)
= \left(\begin{array}{c}
2 + 2z + z^2 \\
1+2z \\
2 + z + 2z^2
\end{array}\right).
\]
Solving the above equation, we find that $(2 + z + 2z^2) \cdot a(z) = 2+z+2z^"$, implying $a(z) = 1 \in \mathbb{Z}_{3}$. Then, 
substituting into the equation $(1+2z)\cdot a(z)+z\cdot b(z)=1+2z$, we obtain that $b(z)=0$. But, with this values, the equation $(1+z+2z^2)\cdot a(z) + (2+z) \cdot b(z) \neq 2+2z+2z^2$. So, $v(z) \notin \mathcal{C}$.  This can also be seen for words $\tilde{v(z)}$ obtained from $v(z)$ by permuting is components.
Thus, the codes are not equivalent.
\end{example}

Due to the results obtained above, the possible equivalence bewteen the convolutional codes will depend on the transformation matrices $Q$ and $H$, but there will exist some properties on them that let us obtain no equivalent convolutional codes that keep the good properties. For this reason, we continue only with the group transformations on the parity and information vectors, as these are the transformations that can lead to different convolutional codes compared to the one we initially had. 

Recall that we want to explore possible methods to construct observable and GDP convolutional codes (see \cite{tesisvirtudes,lieb} for related works). From the systems theory perspective, the solvability of Eq. (\ref{sistemadecoding}) can be studied by analyzing the characterization of the I/S/O representation $\Sigma$ of the code as an \emph{output observable system} (\cite{isabel2}, \cite{isabeldecoding}, \cite{isabel}, \cite{isabelhijo}). For convenience, we denote the matrix $[\Omega_{L+1}(A,C) \mid F_{L}]$ by $T_{L}$.

\begin{definition}
A system $\Sigma$ is \emph{output observable} if the sequence of states $x(0), \ldots, x(L)$ is uniquely determined by the knowledge of the output sequence $y(0), \ldots, y(L)$ for a finite number of steps $L \in \mathbb{N}$. If the matrix $T_{L}$ is defined as follows:
\begin{equation}
\label{matrizdecoding}
T_{L} = \left( \begin{array}{cccccc}
C & D &  &  & & \\
CA & CB & D & \vdots & \vdots & \\
CA^{2} & CAB & CB & D & \vdots & \\
\vdots & \vdots & & \vdots & & \\
CA^{L} & CA^{L-1}B & CA^{L-2}B & \cdots & CB & D \end{array} \right),
\end{equation}
then $\Sigma$ is \emph{output observable} if and only if $rank(T_{L})$ is maximum for all $L \in \mathbb{N}$.
\end{definition}

\begin{proposition}
\label{main}
Let $\Sigma$ be a linear system over $\mathbb{F}$. The output observability of the system $\Sigma$ over $\mathbb{F}$ is invariant under the following group actions:
\begin{enumerate}
\item[i)] Group actions on the parity vector: $\Sigma = (A, B, C, D) \mapsto \Sigma_{2} = (A, BQ, C, DQ)$ for some invertible matrix $Q \in \mathbb{F}^{k \times k}$.
\item[ii)] Group actions on the information vector: $\Sigma = (A, B, C, D) \mapsto \Sigma_{3} = (A, B, H^{-1}C, H^{-1}D)$ for some invertible matrix $H \in \mathbb{F}^{(n-k)\times (n-k)}$.
\end{enumerate}
\end{proposition}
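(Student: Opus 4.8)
The plan is to proceed exactly as in the proofs of Propositions~\ref{reacha} and~\ref{obs}: for each of the two actions I will exhibit an invertible matrix $P$ (acting on the appropriate side) with $T_{L}(\Sigma_{2}) = T_{L}(\Sigma)\cdot P$ in case (i) and $T_{L}(\Sigma_{3}) = P\cdot T_{L}(\Sigma)$ in case (ii), valid for every $L\in\mathbb{N}$. Since multiplying by an invertible matrix on either side leaves the rank unchanged, this yields $\mathrm{rank}(T_{L}(\Sigma_{i})) = \mathrm{rank}(T_{L}(\Sigma))$ for all $L$. Moreover the two actions preserve the integers $\delta$, $k$ and $n$, so the size of $T_{L}$, and hence the value of its maximal possible rank, is the same for $\Sigma$ and for $\Sigma_{i}$. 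Therefore $\mathrm{rank}(T_{L})$ is maximal for all $L$ for $\Sigma$ if and only if it is so for $\Sigma_{i}$, which is precisely output observability.

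For (i), the action $\Sigma\mapsto\Sigma_{2}=(A,BQ,C,DQ)$ fixes $A$ and $C$, so the first block column $\Omega_{L+1}(A,C)$ of $T_{L}$ in~(\ref{matrizdecoding}) is unchanged, while every block of the block-Toeplitz part has the form $CA^{j}B$ or $D$ and is thus replaced by $CA^{j}BQ$ or $DQ$. Reading $T_{L}$ column-block by column-block then gives $T_{L}(\Sigma_{2}) = T_{L}(\Sigma)\cdot\mathrm{diag}(I_{\delta},Q,\dots,Q)$, with $L+1$ copies of $Q$; the right-hand factor is invertible, as required.

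For (ii), the action $\Sigma\mapsto\Sigma_{3}=(A,B,H^{-1}C,H^{-1}D)$ fixes $A$ and $B$ and left-multiplies $C$ and $D$ by $H^{-1}$; consequently each of the $L+1$ block rows of $T_{L}$ (each of height $n-k$) is left-multiplied by $H^{-1}$, so $T_{L}(\Sigma_{3}) = \mathrm{diag}(H^{-1},\dots,H^{-1})\cdot T_{L}(\Sigma)$ with $L+1$ copies of $H^{-1}$, again an invertible factor.

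There is no genuine obstacle here: the entire content is the bookkeeping of how the block pattern of $T_{L}$ reacts to the two substitutions, entirely parallel to the controllability- and observability-matrix computations already performed. The one point worth stating explicitly is that ``maximal rank'' is measured against the fixed dimensions of $T_{L}$, so that invariance of the rank of each individual $T_{L}$ is exactly what is needed. I would also note in passing that the state action excluded from this proposition satisfies $T_{L}(\Sigma_{1}) = T_{L}(\Sigma)\cdot\mathrm{diag}(S,I_{k},\dots,I_{k})$ by the same computation (the Toeplitz blocks $CA^{j}B$ and $D$ are unaffected by similarity), so it too preserves output observability.
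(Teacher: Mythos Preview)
Your argument is correct and is essentially identical to the paper's: in each case you factor $T_{L}$ of the transformed system as $T_{L}$ times an invertible block-diagonal matrix ($\mathrm{diag}(I_{\delta},Q,\dots,Q)$ on the right for (i), $\mathrm{diag}(H^{-1},\dots,H^{-1})$ on the left for (ii)) and conclude equality of ranks. Your added remarks---that the dimensions of $T_{L}$ are unchanged so ``maximal rank'' has the same meaning before and after, and the side observation about the state action $\Sigma_{1}$---are correct refinements that the paper leaves implicit.
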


\begin{proof}
We denote by $T^{(i)}_{L}$ the matrix $[\Omega_{L+1}(A_{i}, C_{i}) \mid F^{i}_{L}]$ obtained from $\Sigma_{i}$. It suffices to prove that $rank[T_{L}] = rank[T^{(i)}_{L}]$, and then, if $\Sigma$ is output observable, we conclude that $\Sigma_{i}$ is also output observable.\\
(i) Since
\begin{equation*}
T^{(2)}_{L} = \left( \begin{array}{cccccc} 
C & DQ &  &  & & \\
CA & CBQ & DQ & \vdots & \vdots & \\
CA^{2} & CABQ & CBQ & DQ & \vdots & \\
\vdots & \vdots & & \vdots & & \\
CA^{L} & CA^{L-1}BQ & CA^{L-2}BQ & \cdots & CBQ & DQ \end{array} \right)
= 
\end{equation*}
\begin{equation*}
\resizebox{\textwidth}{!}{$\displaystyle
=\left( \begin{array}{cccccc} 
C & D &  &  & & \\
CA & CB & D & \vdots & \vdots & \\
CA^{2} & CAB & CB & D & \vdots & \\
\vdots & \vdots & & \vdots & & \\
CA^{L} & CA^{L-1}B & CA^{L-2}B & \cdots & CB & D \end{array} \right)
\cdot
\left( \begin{array}{ccccc} 
I & O &  &  & \\
O & Q & O & \vdots & \vdots \\
O & O & Q & O & \vdots \\
\vdots & \vdots & & \vdots & \\
O & O & O & \cdots & Q \end{array} 
\right) = T_{L} \cdot \mathcal{Q}$,
}
\end{equation*}
and since $\mathcal{Q}$ is invertible, we conclude that $rank[T_{L}] = rank[T^{(2)}_{L}]$.\\
(ii) Since

\begin{equation*}
T^{(3)}_{L} = \left( \begin{array}{cccccc} 
H^{-1}C & H^{-1}D &  &  & & \\
H^{-1}CA & H^{-1}CB & H^{-1}D & \vdots & \vdots & \\
H^{-1}CA^{2} & H^{-1}CAB & H^{-1}CB & H^{-1}D & \vdots & \\
\vdots & \vdots & & \vdots & & \\
H^{-1}CA^{L} & H^{-1}CA^{L-1}B & H^{-1}CA^{L-2}B & \cdots & H^{-1}CB & H^{-1}D \end{array} \right)
= 
\end{equation*}
\begin{equation*}
\resizebox{\textwidth}{!}{$\displaystyle
=\left( \begin{array}{ccccc} 
H^{-1} & O &  &  & \\
O & H^{-1} & O & \vdots & \vdots \\
O & O & H^{-1} & O & \vdots \\
\vdots & \vdots & & \vdots & \\
O & O & O & \cdots & H^{-1} \end{array} \right)
\cdot
\left( \begin{array}{cccccc} 
C & D &  &  & & \\
CA & CB & D & \vdots & \vdots & \\
CA^{2} & CAB & CB & D & \vdots & \\
\vdots & \vdots & & \vdots & & \\
CA^{L} & CA^{L-1}B & CA^{L-2}B & \cdots & CB & D \end{array} \right)
= \mathcal{H} \cdot T_{L},
$}
\end{equation*}
and since $\mathcal{H}$ is invertible, we conclude that $rank[T_{L}] = rank[T^{(3)}_{L}]$.
\end{proof}
\begin{corollary}
If $\Sigma$ is an I/S/O representation of a GDP convolutional code $\mathcal{C}$ over $\mathbb{F}$, $\Sigma_{i}$ for $i=1,2,3$ are I/S/O representations for GDP convolutional codes $\mathcal{C}_{i}$ over $\mathbb{F}$.
\end{corollary}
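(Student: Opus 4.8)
The plan is to combine the three invariance results already established with the structural summary of what it means for an I/S/O representation to produce a GDP code. Recall from Definition~\ref{def:GDP} and the summary (conditions a, b, c) preceding Section~3 that $\Sigma$ yielding a GDP convolutional code amounts to: $\Sigma$ is reachable, $\Sigma$ is observable, and $T_{L} = [\Omega_{L+1}(A,C)\mid F_{L}]$ has full rank (equivalently, by the remark there, $[\Omega_{L+1}(A,C)\mid F_{\alpha}]$ is full rank for $\alpha = (L+1)(n-k)-\delta$). So the corollary reduces to checking that each of the three bullet properties transfers from $\Sigma$ to $\Sigma_i$.

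First I would invoke the Corollary following Proposition~\ref{reacha}, which already gives that each $\Sigma_i$ is an I/S/O representation of a convolutional code $\mathcal{C}_i$ (reachability is preserved, hence minimality is preserved, so $\Sigma_i$ is a genuine minimal I/S/O representation). Second, I note that $T_L$ having full rank already forces $\Omega_{L+1}(A,C)$ to have full rank, and in particular $\Omega_\delta(A,C)$ has full rank, so $\Sigma$ is observable; then by Proposition~\ref{obs} each $\Sigma_i$ is observable as well. Finally, the crux: by Proposition~\ref{main} the output observability of $\Sigma$ — i.e.\ maximality of $\mathrm{rank}(T_L)$ for the relevant $L$ — is invariant under the transformations $\Sigma_2$ and $\Sigma_3$, since one shows $T_L^{(2)} = T_L\cdot\mathcal{Q}$ and $T_L^{(3)} = \mathcal{H}\cdot T_L$ with $\mathcal{Q},\mathcal{H}$ invertible block-diagonal matrices, so $\mathrm{rank}(T_L^{(i)}) = \mathrm{rank}(T_L)$. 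For $\Sigma_1$, similarity in the state space gives $\Omega_{L+1}(A_1,C_1) = \Omega_{L+1}(A,C)\cdot S$ and $F_L^{(1)} = F_L$ (the Toeplitz blocks $CA^jB$, $D$ are unchanged under $A\mapsto S^{-1}AS$, $B\mapsto S^{-1}B$, $C\mapsto CS$, $D\mapsto D$), so $T_L^{(1)} = T_L\cdot\mathrm{diag}(S, I, \ldots, I)$, again a rank-preserving right multiplication. Hence $T_L^{(i)}$ is full rank for each $i$, so each $\mathcal{C}_i$ is GDP.

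The only genuinely delicate point is ensuring the index $L = \lfloor\delta/k\rfloor + \lfloor\delta/(n-k)\rfloor$ and the parameter $\alpha$ are the same for $\mathcal{C}$ and for every $\mathcal{C}_i$: this holds because $n$, $k$, and $\delta$ are unchanged by all three transformations (the state dimension $\delta$, input dimension $k$, and output dimension $n-k$ are fixed), so the matrices $T_L$ and $T_L^{(i)}$ have the same shape and the GDP condition is literally the same rank equation on matrices related by invertible multiplication. I expect this bookkeeping to be the main — and essentially the only — obstacle, and it is a mild one; the rest is a direct appeal to Propositions~\ref{reacha}, \ref{obs}, and \ref{main} together with the Corollary of Proposition~\ref{reacha}.
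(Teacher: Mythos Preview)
Your proposal is correct and follows essentially the same route as the paper, which simply cites Definition~\ref{def:GDP} and Proposition~\ref{main}. You are in fact more careful than the paper: Proposition~\ref{main} only treats $i=2,3$, and you supply the missing (easy) computation $T_L^{(1)} = T_L\cdot\mathrm{diag}(S,I,\dots,I)$ for $i=1$, whereas the paper tacitly relies on the earlier observation that $\mathcal{C}_1=\mathcal{C}$ under similarity.
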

\begin{proof}
It follows from Definition \ref{def:GDP} and Proposition \ref{main}.
\end{proof}

\begin{example}
Let $\Sigma$ and $\Sigma_{2}$ be the systems obtained in Example \ref{example1}. The system $\Sigma$ is output observable because

\begin{equation*}
T_{L} = \left( \begin{array}{ccccccccccc} 
1 & 1 & 2 & 1 & 1 & 0 & 0 & 0 & 0 & 0 & 0 \\
0 & 1 & 0 & 2 & 2 & 1 & 1 & 0 & 0 & 0 & 0 \\
2 & 1 & 0 & 0 & 2 & 2 & 2 & 1 & 1 & 0 & 0 \\
2 & 0 & 0 & 0 & 2 & 0 & 2 & 0 & 2 & 1 & 1 \end{array} \right),
\end{equation*}
which has full rank. Also, $\Sigma_{2}$ is output observable because

\begin{equation*}
T^{(2)}_{L} = \left( \begin{array}{ccccccccccc} 
1 & 1 & 2 & 2 & 0 & 0 & 0 & 0 & 0 & 0 & 0 \\
0 & 1 & 0 & 1 & 0 & 2 & 0 & 0 & 0 & 0 & 0 \\
2 & 1 & 0 & 2 & 1 & 1 & 0 & 2 & 0 & 0 & 0 \\
2 & 0 & 0 & 2 & 1 & 2 & 1 & 1 & 0 & 2 & 0 \end{array} \right),
\end{equation*}
which also has full rank. Therefore, since $\mathcal{C}$ is GDP, the convolutional code $\mathcal{C}_{2}$ is also GDP. 

Additionally, let $\Sigma_{3}$ be the system obtained in Example \ref{example1}. $\Sigma_{3}$ is output observable because

\begin{equation*}
T^{(3)}_{L} = \left( \begin{array}{ccccccccccc} 
2 & 2 & 1 & 2 & 2 & 0 & 0 & 0 & 0 & 0 & 0 \\
0 & 2 & 0 & 1 & 1 & 2 & 2 & 0 & 0 & 0 & 0 \\
1 & 2 & 0 & 0 & 1 & 1 & 1 & 2 & 2 & 0 & 0 \\
1 & 0 & 0 & 0 & 1 & 0 & 1 & 1 & 1 & 2 & 2 \end{array} \right),
\end{equation*}
which also has full rank. Therefore, since $\mathcal{C}$ is GDP, the convolutional code $\mathcal{C}_{3}$ is also GDP. 
\end{example}

Since reachability, observability, and output observability are invariant under group actions on the parity and codeword vectors, if we start with a reachable, observable, and output observable I/S/O representation $\Sigma$ over $\mathbb{F}$, applying the group transformations described above will yield a reachable, observable, and output observable I/S/O representation $\Sigma_{i}$, which allows us to construct observable and GDP convolutional codes $\mathcal{C}_{i}$. The proposal developed in \cite{tomasvirtudes,tesisvirtudes,lieb,lieb2} involves obtaining $\Sigma$ by imposing that $[\Omega_{L+1}(A,C) \mid F_{L}]$ is a superregular matrix. Since these matrices have the property that all non-trivial minors are non-zero, the matrix $[\Omega_{L+1}(A,C) \mid F_{L}]$ has full rank. Thus, if $B$ is full rank, $\Sigma$ is reachable, and if $\Omega_{L}(A,C)$ has full rank, then $\Sigma$ is observable. Consequently, the associated convolutional code is observable, GDP, and can recover $n-k$ erasures out of $n$ symbols.

\begin{example}
As an example, consider $\Sigma$ as a linear system such that $T_{L}$ is a superregular matrix. We take the linear system $\Sigma=(A,B,C,D)$ as given in \cite{lieb,lieb2}, where a GDP (5, 3, 2)-convolutional code with maximum delay $T = L = 1$ is constructed. The system is defined by:

\[
A = \frac{1}{a^{8}-1} \left( \begin{array}{cc} 
a^{64}-a^{112} & a^{128}-a^{240} \\
a^{104}-a^{48} & a^{232}-a^{112} 
\end{array} \right), 
B = \left( \begin{array}{ccc} 
1 & 0 & -a^{32}(a^{8}+1)  \\
0 & 1 & a^{16}(a^{16}+a^{8}+1)  
\end{array} \right),
\]

\[
C = \left( \begin{array}{cc} 
a^{8} & a^{16} \\
a^{16} & a^{32} 
\end{array} \right), 
D = \left( \begin{array}{ccc} 
a & a^{2} & a^{4}  \\
a^{2} & a^{4} & a^{8} 
\end{array} \right),
\]

where $a$ is a primitive element of $\mathbb{F}$, and we consider $\mathbb{F}_{p^{N}}$ with $N > 330$. In this case, the matrix $T_{L}$
is superregular. Thus, the convolutional code associated with $\Sigma$ is reachable, observable, and GDP. 

Now consider the invertible matrix 
$$Q = \left( \begin{array}{ccc} 
a & a & a  \\
0 & a & a \\
0 & 0 & a 
\end{array} \right),$$
If we compute the system $\Sigma_{2} = (A_{2}, B_{2}, C_{2}, D_{2}) = (A, BQ, C, DQ)$, then:

\[
A_{2} = \frac{1}{a^{8}-1} \left( \begin{array}{cc} 
a^{64}-a^{112} & a^{128}-a^{240} \\
a^{104}-a^{48} & a^{232}-a^{112} 
\end{array} \right), 
B_{2} = \left( \begin{array}{ccc} 
a & a & -a^{41}-a^{33}+a  \\
0 & a & a^{33}+a^{25}+a^{17}+a  
\end{array} \right),
\]

\[
C_{2} = \left( \begin{array}{cc} 
a^{8} & a^{16} \\
a^{16} & a^{32} 
\end{array} \right), 
D_{2} = \left( \begin{array}{ccc} 
a^{2} & a^{3}+a^{2} & a^{5}+a^{3}+a^{2}  \\
a^{3} & a^{5}+a^{3} & a^{9}+a^{5}+a^{3} 
\end{array} \right),
\]
Since $B_{2}$ is of full rank, $\Sigma_{2}$ is reachable. It is also observable, and thus can be considered as a minimal I/S/O representation of a convolutional code $\mathcal{C}_{2}$ that is not equivalent to $\mathcal{C}$. Additionally, since $T_{L}^{2}$ is of full rank, $\mathcal{C}_{2}$ is GDP.
\end{example}

\section{Some Conjectures on MDP Construction of Convolutional Codes}

Regarding the distance properties, $\Sigma$ is an I/S/O representation of an MDP convolutional code if and only if each minor of $F_{L}$, which is not trivially zero, is non-zero [cf. Theorem 2.4., \cite{HUT}]. Here, it is clear that $D \neq O$ in order that $\Sigma$ can be the I/S/O of a MDP convolutional code. Thus, if $F_{L}$ is superregular, the associated convolutional code is an MDP code \cite{tomasvirtudes,tesisvirtudes,lieb,lieb2, ALMEIDA2013,ALMEIDA2016} . A natural question arises as to whether this property is invariant under group actions. The invariance of superregularity under group transformations might require additional conditions on the matrices used to perform the action. For this reason, we aim to propose some conjectures, starting with the necessary condition for the MDP property of a convolutional code to remain invariant under the transformations. Specifically, $F_{L}$ must be of full rank.

\begin{proposition}
\label{mainmdp}
Let $\Sigma=(A,B,C,D)$ be a minimal I/S/O representation of an MDP convolutional code defined over $\mathbb{F}$ with $D \neq O$. We consider the following group actions:
\begin{enumerate}
\item[i)] Group actions on the parity vector: $\Sigma=(A,B,C,D) \mapsto \Sigma_{2}= (A,BQ,C,DQ)$ for some invertible matrix $Q \in \mathbb{F}^{k \times k}$.
\item[ii)] Group actions on the information vector: $\Sigma=(A,B,C,D) \mapsto \Sigma_{3}= (A,B,H^{-1}C,H^{-1}D)$ for some invertible matrix $H \in \mathbb{F}^{(n-k) \times (n-k)}$.
\end{enumerate}
Let $F^{(i)}_{L}, i=2,3$ be the corresponding matrices, defined as
\begin{equation*}
F^{(i)}_{L}= \left(\begin{array}{ccccc} 
 D_{i} & O & O & O & O \\
 C_{i}B_{i} & D_{i} & O & O & O \\
 C_{i}A_{i}B_{i} & C_{i}B_{i} & D_{i} & O & O \\
 \vdots & & \vdots & \vdots & \vdots \\
 C_{i}A_{i}^{L-1}B_{i} & C_{i}A_{i}^{L-2}B_{i} & \dots & C_{i}B_{i} & D_{i} \end{array} \right),
\end{equation*}
Then, $F^{(i)}_{L}$ is full rank. 
\end{proposition}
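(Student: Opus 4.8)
The plan is to reduce the statement to the single fact that the original matrix $F_{L}$ has full rank, exploiting that $F^{(2)}_{L}$ and $F^{(3)}_{L}$ are obtained from $F_{L}$ by multiplication with an invertible block-diagonal matrix, exactly as in the proof of Proposition~\ref{main}.

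First I would record the block identities. Set $\mathcal{Q} = \mathrm{diag}(Q,\ldots,Q)\in\mathbb{F}^{(L+1)k\times(L+1)k}$ and $\mathcal{H} = \mathrm{diag}(H^{-1},\ldots,H^{-1})\in\mathbb{F}^{(L+1)(n-k)\times(L+1)(n-k)}$, each with $L+1$ diagonal blocks. Replacing $B$ by $BQ$ and $D$ by $DQ$ turns every block $CA^{m}B$ of $F_{L}$ into $CA^{m}BQ$ and each diagonal block $D$ into $DQ$, leaving the zero blocks above the diagonal unchanged; replacing $C$ by $H^{-1}C$ and $D$ by $H^{-1}D$ left-multiplies every block of $F_{L}$ by $H^{-1}$. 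Hence, blockwise,
\begin{equation*}
F^{(2)}_{L} = F_{L}\cdot\mathcal{Q},\qquad F^{(3)}_{L} = \mathcal{H}\cdot F_{L}.
\end{equation*}
Since $\mathcal{Q}$ and $\mathcal{H}$ are invertible, $\mathrm{rank}(F^{(2)}_{L}) = \mathrm{rank}(F^{(3)}_{L}) = \mathrm{rank}(F_{L})$.

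It then remains to show that $F_{L}$ itself has full rank. Here I would use the MDP characterization [cf.\ Theorem~2.4, \cite{HUT}]: since $\Sigma$ is an I/S/O representation of an MDP code, every minor of $F_{L}$ that is not trivially zero is nonzero, so it suffices to exhibit one maximal-size minor whose zero pattern does not force it to vanish. Because $F_{L}$ is lower block-triangular Toeplitz with all diagonal blocks equal to $D\neq O$, one can select a maximal set of columns (resp.\ rows, according to whether $k\geq n-k$ or $k< n-k$) so that the resulting square submatrix is block triangular with diagonal blocks extracted from $D$; its determinant is a non-trivial polynomial in the free entries, hence the minor is not trivially zero and therefore nonzero, so $F_{L}$ has full rank. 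Combining with the previous paragraph completes the proof.

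The blockwise verification of the two factorizations is routine. The only genuine point is the existence of a non-trivially-zero maximal minor of $F_{L}$, which is where the hypotheses that $\mathcal{C}$ is MDP and $D\neq O$ are really used; this is the step I expect to need the most care to phrase precisely, though it amounts to the fact that the maximal column distances of an MDP code are attained and can, if preferred, be quoted directly from the MDP literature.
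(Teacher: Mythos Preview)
Your proposal is correct and follows essentially the same approach as the paper: factor $F^{(2)}_{L}=F_{L}\cdot\mathcal{Q}$ and $F^{(3)}_{L}=\mathcal{H}\cdot F_{L}$ with invertible block-diagonal $\mathcal{Q},\mathcal{H}$, and conclude that the ranks agree with that of $F_{L}$. The only difference is that the paper simply asserts ``since $\mathcal{C}$ is MDP, $F_{L}$ is full rank'' without further comment, whereas you additionally justify this via the non-trivially-zero-minor characterization and the block-triangular structure with $D\neq O$; that extra step is sound and more detailed than what the paper provides.
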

\begin{proof}
We start that since $\mathcal{C}$ is MDP, then $F_{L}$ is full rank.\\
(i)  We have:
\begin{equation*}
F^{(2)}_{L} = 
\left(\begin{array}{ccccc} 
 D &  &  & & \\
 CB & D & \vdots & \vdots & \\
 CAB & CB & D & \vdots & \\
 \vdots & & \vdots & & \\
 CA^{L-1}B & CA^{L-2}B & \dots & CB & D \end{array} \right) \cdot \left(\begin{array}{ccccc} 
 Q &  &  & & \\
 & Q & \vdots & \vdots & \\
 & & Q & \vdots & \\
 \vdots & & \vdots & & \\
 & & \dots & & Q \end{array} \right),
\end{equation*}
which implies that $\text{rank}(F^{(2)}_{L}) = \text{rank}(F_{L})$. Therefore, if $\mathcal{C}$ is MDP, the necessary condition for $\mathcal{C}_{2}$ to be MDP is satisfied.\\
(ii) Similarly, since 
\begin{equation*}
F^{(3)}_{L} =
\left(\begin{array}{ccccc} 
H^{-1} & O &  &  & \\
O & H^{-1} & O & \vdots & \vdots \\
O & O & H^{-1} & O & \vdots \\
\vdots & \vdots & & \vdots & \\
O & O & O & \dots & H^{-1} \end{array} \right) \cdot \left(\begin{array}{ccccc} 
 D &  &  & & \\
 CB & D & \vdots & \vdots & \\
 CAB & CB & D & \vdots & \\
 \vdots & & \vdots & & \\
 CA^{L-1}B & CA^{L-2}B & \dots & CB & D \end{array} \right),
\end{equation*}
we obtain that $\text{rank}(F^{(3)}_{L}) = \text{rank}(F_{L})$. Therefore, if $\mathcal{C}$ is MDP, the necessary condition for $\mathcal{C}_{3}$ to be MDP is satisfied.
\end{proof}
\begin{example}
Let $\Sigma$ and $\Sigma_{2}$ be the systems given in Example \ref{example1}. We can verify that since $F_{L}$ has full rank, then $F^{(2)}_{L}$ is also full rank, where
\begin{equation*}
F_{L} = 
\left(\begin{array}{cccccccc} 
 1 & 1 & 0 & 0 & 0 & 0 & 0 & 0 \\
 2 & 2 & 1 & 1 & 0 & 0 & 0 & 0 \\
 0 & 2 & 2 & 2 & 1 & 1 & 0 & 0 \\
  0 & 2 & 0 & 2 & 0 & 2 & 1 & 1 
\end{array} \right)  \textrm{ and } F^{(2)}_{L} = 
\left(\begin{array}{cccccccc} 
 2 & 0 & 0 & 0 & 0 & 0 & 0 & 0 \\
 1 & 0 & 2 & 0 & 0 & 0 & 0 & 0 \\
 2 & 1 & 1 & 0 & 2 & 0 & 0 & 0 \\
  2 & 1 & 2 & 1 & 1 & 0 & 2 & 0 
\end{array} \right).
\end{equation*}

Furthermore, let $\Sigma_{3}$ be the system given in Example \ref{example1}. We can verify that $F^{(3)}_{L}$ has full rank, where
\begin{equation*}
 F^{(3)}_{L} = 
\left(\begin{array}{cccccccc} 
 2 & 2 & 0 & 0 & 0 & 0 & 0 & 0 \\
 1 & 1 & 2 & 2 & 0 & 0 & 0 & 0 \\
 0 & 1 & 1 & 1 & 2 & 2 & 0 & 0 \\
0 & 1 & 0 & 1 & 1 & 1 & 2 & 2
\end{array} \right).
\end{equation*}
\end{example}
\begin{remark}
If we also want the codes to be non-equivalent, it is likely that we will need to impose additional conditions on the matrices $Q$ and $H$.
\end{remark}
A sufficient condition for the proposed transformations to preserve the MDP property would require that the condition of every trivially non-zero determinant of $F_{L}$ being non-zero remains invariant under the group actions. 
Although we have shown that the maximum rank of $F_{L}$ is preserved, proving that this result holds for all non-trivially non-zero minors in $F_{L}$ is not straightforward. The first condition that is necessary is already determined by the singularity of the matrices $Q$ and $H$. However, multiplying an invertible matrix by a superregular matrix does not generally preserve the superregularity of the product. Only in case the invertible matrix is diagonal we are sure that it preserves superregularity by matrix multiplication.

\begin{proposition}
\label{mainmdp2}
Let $\Sigma$ over $\mathbb{F}$ be a minimal I/S/O representation of an MDP convolutional code. The MDP property is invariant under:
\begin{enumerate}
\item[i)] Group actions on the parity vector: $\Sigma= (A,B,C,D) \mapsto \Sigma_{2}=(A,BQ,C,DQ)$ for some diagonal invertible matrix $Q \in \mathbb{F}^{k \times k}$.
\item[ii)] Group actions on the information vector: $\Sigma=(A,B,C,D) \mapsto \Sigma_{3}=(A,B,H^{-1}C,H^{-1}D)$ for some diagonal invertible matrix $H \in \mathbb{F}^{(n-k)\times (n-k)}$.
\end{enumerate}
\end{proposition}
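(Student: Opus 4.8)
The plan is to combine the determinantal characterisation of the MDP property recalled just before Proposition~\ref{mainmdp} (Theorem~2.4 of \cite{HUT}): \emph{$\Sigma$ is an I/S/O representation of an MDP code if and only if every minor of $F_{L}$ that is not trivially zero is non-zero}, together with the explicit factorisations already produced in the proof of Proposition~\ref{mainmdp}. Since $\mathcal{C}$ is MDP, by that theorem every non-trivially-zero minor of $F_L$ is non-zero, and the goal is to transport this to $F^{(2)}_L$ and $F^{(3)}_L$.

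First I would observe that in both cases the transformed Toeplitz matrix is obtained from $F_L$ by a genuinely \emph{diagonal} change of basis. From the proof of Proposition~\ref{mainmdp} we have $F^{(2)}_L = F_L\cdot \mathcal{Q}$ with $\mathcal{Q}=\mathrm{diag}(Q,\dots,Q)$ and $F^{(3)}_L = \mathcal{H}\cdot F_L$ with $\mathcal{H}=\mathrm{diag}(H^{-1},\dots,H^{-1})$; because $Q$ and $H$ are now assumed diagonal, both $\mathcal{Q}$ and $\mathcal{H}$ are diagonal invertible matrices, say $\mathcal{Q}=\mathrm{diag}(q_1,\dots,q_N)$ and $\mathcal{H}=\mathrm{diag}(h_1,\dots,h_M)$ with all $q_j\neq 0$ and all $h_i\neq 0$.

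The elementary fact to invoke is that right (respectively left) multiplication by a diagonal invertible matrix rescales columns (respectively rows) by non-zero scalars. Hence, for any choice of row indices $I$ and column indices $J$ of equal cardinality,
\[
\det\big((F^{(2)}_L)_{I,J}\big)=\Big(\prod_{j\in J} q_j\Big)\det\big((F_L)_{I,J}\big),\qquad
\det\big((F^{(3)}_L)_{I,J}\big)=\Big(\prod_{i\in I} h_i\Big)\det\big((F_L)_{I,J}\big).
\]
Since the scalars are non-zero, $\det\big((F^{(2)}_L)_{I,J}\big)\neq 0$ iff $\det\big((F_L)_{I,J}\big)\neq 0$, and likewise for $F^{(3)}_L$. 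Moreover, rescaling columns or rows by non-zero scalars leaves the location of every zero entry unchanged, so the strictly-upper block-triangular forced-zero pattern of $F_L$ is literally the pattern of $F^{(2)}_L$ and of $F^{(3)}_L$; consequently a minor of $F^{(2)}_L$ (or $F^{(3)}_L$) is trivially zero if and only if the corresponding minor of $F_L$ is. Combining the two observations, every non-trivially-zero minor of $F^{(2)}_L$ and of $F^{(3)}_L$ is non-zero, and Theorem~2.4 of \cite{HUT} then yields that $\mathcal{C}_2$ and $\mathcal{C}_3$ are MDP.

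I expect the only delicate point to be the assertion that the notion of \emph{trivially zero} minor is carried over unchanged by these transformations — which is precisely where diagonality of $Q$ and $H$ is essential. For a general invertible $Q$ the block $DQ$ (and each $CA^{s}BQ$) may have a different zero pattern from $D$ (resp. $CA^{s}B$), so the combinatorial description of the forced zeros, hence of the trivially zero minors, could change and the equivalence above would break; for diagonal $Q$ and $H$ the zero pattern of every block is preserved entrywise, which is exactly what makes the argument valid.
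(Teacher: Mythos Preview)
Your proof is correct and follows essentially the same route as the paper: both arguments reduce to the elementary fact that right (resp.\ left) multiplication by an invertible diagonal matrix rescales each column (resp.\ row) by a non-zero scalar, so every submatrix determinant of $F_L$ is multiplied by a non-zero factor and hence vanishes iff the corresponding minor of $F^{(i)}_L$ does. Your treatment is in fact slightly more careful than the paper's, since you make explicit that the zero pattern---and hence the set of \emph{trivially zero} minors---is unchanged under diagonal rescaling, a point the paper leaves implicit in ``the result follows from this fact.''
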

\begin{proof}
(i) Let $N$ be a $k\times n$ matrix and $Q$ an invertible diagonal matrix of size $n$ with diagonal elements $d_1,\hdots,d_n$. Suppose $N(I,J)$ is the $l\times l$ submatrix of $N$ corresponding to rows with indices in $I$ and columns with indices in $J$. Then, the $l\times l$ submatrix $(N\cdot Q)(I,J)$ of $N\cdot Q$ corresponding to the same rows and columns is computed by multiplying the rows of $N$ with indices in $I$ with the columns of $Q$ with indices in $J$. Thus, the $j$th column of $(N\cdot Q)(I,J)$ is equal to the $j$th column of $N(I,J)$ multiplied by the $j$th element of the diagonal of $Q$. Therefore, $det((N\cdot q)(I,J))=\prod_{j\in J}d_i M(I,J)$. So, $det((N\cdot Q)(I,J))\neq 0$ if and only if $det(N(I,J))\neq 0$. The result follows from this fact.\\
(ii) This follows from an argument similar to the above but with left multiplication.
\end{proof}
\begin{conjetura}
\label{mainmdp3}
The largest subgroup of the general linear group (of the appropriate size) that preserves the MDP property under the group actions defined in Proposition \ref{mainmdp2} is the subgroup of invertible diagonal matrices.
\end{conjetura}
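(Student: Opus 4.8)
The plan is to recast the statement as a question about superregular matrices, record the easy inclusion (which is Proposition~\ref{mainmdp2}), and then establish maximality by producing, for each invertible matrix outside the diagonal torus, an MDP code whose image under the group action is not MDP.

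\emph{Reduction.} By [cf.\ Theorem~2.4, \cite{HUT}], a tuple $\Sigma=(A,B,C,D)$ is an I/S/O representation of an MDP code if and only if $F_{L}$ is superregular, i.e.\ every minor of $F_{L}$ not forced to vanish by the block lower-triangular Toeplitz shape is nonzero. By Proposition~\ref{mainmdp}, the parity action by $Q$ replaces $F_{L}$ with $F^{(2)}_{L}=F_{L}\,\mathcal{Q}$, $\mathcal{Q}=\mathrm{diag}(Q,\dots,Q)$, and the information action by $H$ replaces it with $F^{(3)}_{L}=\mathcal{H}\,F_{L}$, $\mathcal{H}=\mathrm{diag}(H^{-1},\dots,H^{-1})$; both products keep the block lower-triangular Toeplitz shape, hence have the same pattern of trivially zero minors as $F_{L}$. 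So ``$Q$ preserves the MDP property'' means exactly that right multiplication by $\mathcal{Q}$ sends every superregular $F_{L}$ coming from an MDP code to a superregular matrix, where one should let $\mathbb{F}$ range over all sufficiently large fields since MDP codes with fixed parameters require a large alphabet. Writing $\mathcal{S}$ for the set of such $Q$, a subgroup preserves MDP iff it is contained in $\mathcal{S}$, so the conjecture says: the largest subgroup of $GL_{k}(\mathbb{F})$ contained in $\mathcal{S}$ is the diagonal torus $\mathbb{T}$ (and symmetrically for $H$).

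\emph{Easy inclusion.} $\mathbb{T}\subseteq\mathcal{S}$ is Proposition~\ref{mainmdp2}: a diagonal right factor rescales columns, multiplying each fixed minor by a nonzero scalar, so it preserves the superregular pattern verbatim. I would note that a within-block column permutation also lies in $\mathcal{S}$ (it permutes the columns of every submatrix and fixes the trivially-zero pattern), so the whole monomial group lies in $\mathcal{S}$; but for monomial $Q$ the code $\mathcal{C}_{2}$ arises from $\mathcal{C}$ by permuting and rescaling information coordinates, and the permutation part makes $\mathcal{C}_{2}$ equivalent to $\mathcal{C}$ in the sense of the permutation equivalence of Section~2. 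Hence the substance of the conjecture is that, up to that equivalence, $\mathbb{T}$ is the whole stabilizer, i.e.\ every invertible matrix that is not monomial lies outside $\mathcal{S}$; pinning down this dichotomy precisely (the diagonal torus versus its normalizer, the monomial group) is the first thing to settle.

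\emph{Maximality.} Fix a non-monomial $Q$; the goal is an explicit I/S/O representation with $F_{L}$ superregular but $F_{L}\,\mathcal{Q}$ not superregular. I would work over $\overline{\mathbb{F}}$: the locus $\mathcal{U}$ of tuples $(A,B,C,D)$ with $F_{L}$ superregular is a nonempty Zariski-open, hence irreducible, subset of affine space (nonempty by the superregular-Toeplitz constructions of \cite{strongly2,ALMEIDA2013,HUT}), and ``$F_{L}\,\mathcal{Q}$ superregular'' is again an open condition, so $Q\in\mathcal{S}$ would force every non-trivially-zero minor of $F_{L}\,\mathcal{Q}$, viewed as a polynomial in the entries of $(A,B,C,D)$, to be nowhere zero on $\mathcal{U}$. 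I would then start from a concrete superregular point --- e.g.\ one for which $F_{L}$ is a proper lower-triangular superregular Toeplitz matrix as in \cite{strongly2} --- use that $Q$ non-monomial means some column of $Q$ carries two nonzero entries whose supports cannot be aligned by a permutation, and build a one-parameter deformation of $D$ (with $A,B,C$ fixed, or minimally deformed) that stays inside the superregular locus of $F_{L}$ while driving some chosen non-trivially-zero minor of the block $DQ$, or of a mixed block $CA^{j}BQ$, to zero. The main obstacle is exactly the existence of such a deformation: one must show that the targeted minor of $F_{L}\,\mathcal{Q}$ is not an algebraic consequence of the non-trivially-zero minors of $F_{L}$, so that ``$F_{L}$ superregular'' and ``that minor vanishes'' are simultaneously satisfiable. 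I expect this to require a combinatorial analysis via Cauchy--Binet expansion of the minors of $F_{L}\,\mathcal{Q}$, separating those that expand into combinations of non-trivially-zero minors of $F_{L}$ (which remain nonzero for every $Q$, recovering the monomial part) from those that admit a genuine cancellation (available only when $Q$ is non-monomial), together with a dimension count showing $\mathcal{U}$ is large enough to realize that cancellation. The monomial-but-not-diagonal case is then disposed of by the equivalence remark, or simply excluded if one reads the statement literally.
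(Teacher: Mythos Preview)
The statement you are attempting to prove is labelled \emph{Conjecture} in the paper and is left open: the paper gives no proof, not even a sketch, and moves directly from Proposition~\ref{mainmdp2} to the Conclusions. There is therefore nothing to compare your proposal against.

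On the substance of your proposal, two remarks. First, your observation that every monomial matrix $Q$ (permutation times diagonal) lies in $\mathcal{S}$ is correct: right multiplication by $\mathrm{diag}(Q,\dots,Q)$ permutes the columns of $F_{L}$ within each $k$-block, this permutation sends non-trivially-zero minors to non-trivially-zero minors, and hence preserves superregularity. For $k\geq 2$ the monomial group strictly contains the diagonal torus, so the conjecture as literally stated is false; you are right that one has to either quotient by the permutation equivalence of codes or replace ``diagonal'' by ``monomial'' for the statement to have a chance. The paper does not address this point. Second, your maximality argument is not a proof but a plan: the heart of the matter --- showing that for every non-monomial $Q$ there exists a superregular $F_{L}$ with $F_{L}\,\mathcal{Q}$ not superregular --- is only outlined, and the key step (that the vanishing of some targeted minor of $F_{L}\,\mathcal{Q}$ is not forced by the non-vanishing of all non-trivially-zero minors of $F_{L}$) is exactly the difficulty you flag and do not resolve. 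So the proposal identifies a genuine obstruction to the conjecture and a reasonable strategy for the corrected version, but it does not close the gap.
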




\section{Conclusions}
Convolutional codes are deeply connected to several areas of mathematics. These connections enable us to analyze the dynamics of encoders, construct convolutional codes with specific properties, and develop more efficient decoding methods.

In this paper, we have focused on utilizing group actions and the I/S/O representation of convolutional codes to construct new classes of observable and GDP convolutional codes. Also, we give some results in order to construct MDP convolutional codes.

As part of future work, once the I/S/O representations of convolutional codes over modular integer rings \cite{NAPPring,angelring} are fully developed, it will be natural to explore the application of the proposed construction to the decoding problem for convolutional codes over these rings.



\section*{Disclosure statement}

All authors declare that they have no conflicts of interest.

\section*{Funding}
This work was conducted within the framework of the project \emph{Algebraic Methods for the Recovery, Correction, and Security of Digital Information (MARCSID)}, funded by the Spanish Ministry of Science and Innovation under the State Plan for Scientific and Technical Research and Innovation 2021-2023, with grant number TED2021-131158A-I00.

\bibliographystyle{tfnlm}
\bibliography{referencesfinal}

\begin{thebibliography}{10}
\providecommand{\url}[1]{\normalfont{#1}}
\providecommand{\urlprefix}{Available from: }

\bibitem{napiso}
{Napp, D, Perea, C and Pinto, R}. {Input-state-output representations and
  constructions of finite support 2D convolutional codes}. {Advances in
  Mathematics of Communications}. 2010;\hspace{0pt}4(4):533--545.

\bibitem{pinto2017}
{Pinto, R and Sim{\~o}es, R}. {On Minimality of ISO Representation of Basic 2D
  Convolutional Codes}. In: {Coding Theory and Applications}; Cham. Springer
  International Publishing; 2017. p. 257--271.

\bibitem{NPP2017}
{Napp, D, Pereira, R and Rocha, P}. {A State Space Approach to Periodic
  Convolutional Codes}. In: {Coding Theory and Applications}; Cham. Springer
  International Publishing; 2017. p. 238--247.

\bibitem{napp2019}
{Napp, D, Pereira, R, Pinto, R and Rocha, P}. {Realization of 2D (2,2)-Periodic
  Encoders by Means of 2D Periodic Separable Roesser Models}. {International
  Journal of Applied Mathematics and Computer Science}.
  2019;\hspace{0pt}29(3):527--539.
  \urlprefix\url{{https://doi.org/10.2478/amcs-2019-0039}}.

\bibitem{nuevonapp}
{Climent, J , Napp, D, Pinto, R and Requena, V}. {Minimal State-Space
  Representation of Convolutional Product Codes}. {Mathematics}.
  2021;\hspace{0pt}9(12):1410.
  \urlprefix\url{{https://www.mdpi.com/2227-7390/9/12/1410}}.

\bibitem{climentpinto}
{Climent, J , Napp, D , Pinto, R and Sim\'oes, R}. {Series concatenation of 2D
  convolutional codes by means of input-state-output representations}.
  {International Journal of Control}. 2018;\hspace{0pt}91(12):2682--2691.
  \urlprefix\url{{https://doi.org/10.1080/00207179.2017.1410573}}.

\bibitem{climent}
{Climent, J, Herranz, V y Perea, C}. {A first approximation of concatenated
  convolutional codes from linear systems theory viewpoint}. {Linear Algebra
  and its Applications}. 2007;\hspace{0pt}425(2):673--699.
  \urlprefix\url{{https://www.sciencedirect.com/science/article/pii/S0024379507001346}}.

\bibitem{valencia}
{Climent, J, Herranz, V and Perea, C}. {Linear system modelization of
  concatenated block and convolutional codes}. {Linear Algebra and its
  Applications}. 2008;\hspace{0pt}429(5):1191--1212.
  \urlprefix\url{{https://www.sciencedirect.com/science/article/pii/S0024379507004132}}.

\bibitem{isabel3}
{Garc\'ia-Planas MI, Souidi E M and Um LE}. Analysis of control properties of
  concatenated convolutional codes. Cybernetics and Physics.
  2012;\hspace{0pt}1(4):252--257.

\bibitem{noeconcat}
{DeCastro-Garc\'ia, N and Garc\'ia-Planas, M I}. {Concatenated linear systems
  over rings and their application to construction of concatenated families of
  convolutional codes}. {Linear Algebra and its Applications}.
  2018;\hspace{0pt}542:624--647.
  \urlprefix\url{{https://www.sciencedirect.com/science/article/pii/S0024379517306791}}.

\bibitem{lieb}
{Lieb, J and Rosenthal, J}. Erasure decoding of convolutional codes using
  first-order representations. Mathematics of Control, Signals, and Systems.
  2021;\hspace{0pt}33:499--513.

\bibitem{decodingros}
{Rosenthal, J}. An algebraic decoding algorithm for convolutional codes. In:
  Dynamical Systems, Control, Coding, Computer Vision; Basel. {Birkh{\"a}user
  Basel}; 1999. p. 343--360.

\bibitem{paconuevo}
Martín~Sánchez~S, Plaza~Martín~FJ. A decoding algorithm for convolutional
  codes. Mathematics. 2022;\hspace{0pt}10(9).
  \urlprefix\url{https://www.mdpi.com/2227-7390/10/9/1573}.

\bibitem{strongly2}
Gluesing-Luerssen~H, Rosenthal~J, Smarandache~R. Strongly-mds convolutional
  codes. IEEE Transactions on Information Theory.
  2006;\hspace{0pt}52(2):584--598.

\bibitem{ALMEIDA2013}
Almeida~P, Napp~D, Pinto~R. A new class of superregular matrices and mdp
  convolutional codes. Linear Algebra and its Applications.
  2013;\hspace{0pt}439(7):2145--2157.
  \urlprefix\url{https://www.sciencedirect.com/science/article/pii/S0024379513004126}.

\bibitem{ALMEIDA2016}
Almeida~P, Napp~D, Pinto~R. Superregular matrices and applications to
  convolutional codes. Linear Algebra and its Applications.
  2016;\hspace{0pt}499:1--25.
  \urlprefix\url{https://www.sciencedirect.com/science/article/pii/S0024379516001464}.

\bibitem{angelMDP}
Ángel Luis~{Muñoz Castañeda}, Plaza-Martín~FJ. On the existence and
  construction of maximum distance profile convolutional codes. Finite Fields
  and Their Applications. 2021;\hspace{0pt}75:101877.
  \urlprefix\url{https://www.sciencedirect.com/science/article/pii/S107157972100071X}.

\bibitem{chen}
Chen~Z. Convolutional codes with a maximum distance profile based on skew
  polynomials. IEEE Transactions on Information Theory.
  2022;\hspace{0pt}68(8):5178--5184.

\bibitem{elias}
{Elias, P}. {Coding for two noisy channels}. {IRE WESCON Convention Record}.
  1955;\hspace{0pt}4:37--46.

\bibitem{forney}
{Forney, G}. Convolutional codes i: Algebraic structure. {IEEE Transactions on
  Information Theory}. 1970;\hspace{0pt}16(6):720--738.

\bibitem{Conections}
{Rosenthal, J}. {Connections between linear systems and convolutional codes}.
  In: {Codes, systems and graphical models, IMA, vol. 123}. Springer - Verlag;
  2001. p. 39 -- 66.

\bibitem{York}
{York, EV}. {Algebraic description and construction of error correcting codes,
  a systems theory point of view} ; 1997. {Thesis dissertation, Univ. Notre
  Dame}.

\bibitem{Mc2}
{McEliece, RJ}. {A Public-Key Cryptosystem Based On Algebraic Coding Theory}.
  Deep Space Network Progress Report. 1978;\hspace{0pt}44:114--116.
  \urlprefix\url{{https://ui.adsabs.harvard.edu/abs/1978DSNPR..44..114M}}.

\bibitem{Mc}
{McEliece, R L}. {The algebraic theory of convolutional codes}. In: {Handbook
  of Coding Theory I}; 1998. p. 1065--1138.

\bibitem{MDS3}
{Gluesing-Luerssen, H, Rosenthal, J and Smarandache R}. {Strongly-{MDS}
  convolutional codes}. IEEE Transactions on Information Theory.
  2006;\hspace{0pt}52(2):584--598.

\bibitem{MDS}
{Rosenthal, J and Smarandache, R}. {Maximum distance separable convolutional
  codes}. {Applicable Algebra in Engineering, Communication and Computing}.
  1999;\hspace{0pt}10 (1):15--32.

\bibitem{ros}
{Rosenthal, J, Schumacher, JM and York, EV}. {On behaviors and convolutional
  codes}. {IEEE Transactions on Information Theory}.
  1996;\hspace{0pt}42(6):1881--1891.

\bibitem{allen}
{Allen, BM}. {Linear systems analysis and decoding of convolutional codes} ;
  1999. {Thesis dissertation, Univ. Notre Dame};
  \urlprefix\url{{http://user.math.uzh.ch/rosenthal/Paper/ThesisBrian.pdf}}.

\bibitem{decoding}
{Rosenthal, J}. {An Algebraic Decoding Algorithm for Convolutional Codes}. In:
  {Dynamical Systems, Control, Coding, Computer Vision}; Basel. {Birkh{\"a}user
  Basel}; 1999. p. 343--360.

\bibitem{viterbi}
{Viterbi, A}. Error bounds for convolutional codes and an asymptotically
  optimum decoding algorithm. IEEE Transactions on Information Theory.
  1967;\hspace{0pt}13(2):260--269.

\bibitem{isabeldecoding}
{Garc\'ia-Planas MI, Souidi EM and Um LE}. Decoding algorithm for convolutional
  codes under linear systems point of view. Recent Advances in Circuits,
  Systems, Signal Processing and Communications. 2014;\hspace{0pt}:17--24.

\bibitem{BCH}
{Rosenthal, J and York, FV}. {BCH convolutional codes}. {IEEE Transactions on
  Information Theory}. 1999;\hspace{0pt}45(6):1833--1844.

\bibitem{MDS2}
{Smarandache, R, Gluesing-Luerssen, H and Rosenthal, J}. {Strongly {MDS}
  convolutional codes, a new class of codes with maximal decoding capability}.
  In: Proceedings IEEE International Symposium on Information Theory; 2002. p.
  426--.

\bibitem{tomasvirtudes}
{Tomas, V and Rosenthal, J and Smarandache, R}. Decoding of convolutional codes
  over the erasure channel. IEEE Transactions on Information Theory.
  2012;\hspace{0pt}58(1):90--108.

\bibitem{tesisvirtudes}
{Tomás, V}. {Complete-MDP Convolutional Codes over the Erasure Channel} ;
  2010. {Thesis dissertation, Univ. Alicante};
  \urlprefix\url{{https://rua.ua.es/dspace/bitstream/10045/18325/1/tesis_Tomas.pdf}}.

\bibitem{isabel2}
{Garc\'ia-Planas MI, Souidi E M and Um LE}. {Convolutional codes under control
  theory point of view. Analysis of output-observability}. Recent Advances in
  Circuits, Communications and Signal Processing. 2013;\hspace{0pt}:131--137.

\bibitem{isabel}
{Garc\'ia-Planas MI, Souidi E M and Um LE}. {Convolutional codes under linear
  systems point of view. Analysis of output-controllability}. {WSEAS
  Transactions on Mathematics}. {2010};\hspace{0pt}{11 (4)}:{324--333}.

\bibitem{isabelhijo}
{Garc\'ia-Planas, MI and Dom\'inguez- Garc\'ia, JL}. {Alternative tests for
  functional and pointwise output-controllability of linear time-invariant
  systems}. {Systems \& Control Letters}. 2013;\hspace{0pt}{62 (5)}:{382--387}.

\bibitem{lieb2}
{Lieb, J}. {Complete {MDP} convolutional codes}. Journal of Algebra and Its
  Applications. 2019;\hspace{0pt}18 (6):1950105.

\bibitem{HUT}
{Hutchinson, R, Rosenthal, J and Smarandache,R}. {Convolutional codes with
  maximum distance profile}. {Systems \& Control Letters}. 2005;\hspace{0pt}54
  (1):53--63.

\bibitem{NAPPring}
Napp~D, Pinto~R, Rocha~C. State representations of convolutional codes over a
  finite ring. Linear Algebra and its Applications.
  2022;\hspace{0pt}640:48--66.
  \urlprefix\url{https://www.sciencedirect.com/science/article/pii/S0024379521004353}.

\bibitem{angelring}
Mu{\~n}oz~Casta{\~n}eda~{\'A}L, DeCastro-Garc{\'i}a~N, Carriegos~MV. On the
  state approach representations of convolutional codes over rings of modular
  integers. Mathematics. 2021;\hspace{0pt}9(22).
  \urlprefix\url{https://www.mdpi.com/2227-7390/9/22/2962}.

\end{thebibliography}
\end{document}